
\documentclass[a4paper,fleqn,10pt]{cas-dc}
\usepackage[numbers]{natbib}
\usepackage{amsmath,amssymb,amsthm}
\setlength{\mathindent}{0pt}
\usepackage{color}
\usepackage{chngcntr}
\usepackage{indentfirst}
\usepackage{enumitem}
\usepackage{hyperref}
\usepackage{bbm}
\usepackage{siunitx}
\sisetup{output-exponent-marker=\ensuremath{\mathrm{e}}}
\usepackage{xcolor}
\usepackage[linesnumbered,ruled,vlined]{algorithm2e}


\SetCommentSty{mycommfont}

\SetKwInput{KwInput}{Input}                
\SetKwInput{KwOutput}{Output}              

\usepackage{optidef}
\usepackage{cleveref}

\def\tsc#1{\csdef{#1}{\textsc{\lowercase{#1}}\xspace}}
\tsc{WGM}
\tsc{QE}
\tsc{EP}
\tsc{PMS}
\tsc{BEC}
\tsc{DE}



\theoremstyle{case}

\newtheorem{remark}{Remark}
\newtheorem{prop}{Proposition}[section]
\newtheorem{theorem}{Theorem}[section]

\newtheorem{corollary}{Corollary}[theorem]

\newtheorem*{theorem*}{Theorem}
\newtheorem*{prop*}{Proposition}

\DeclareMathOperator*{\argmin}{arg\,min}

\usepackage{tikz}
\usetikzlibrary{matrix,positioning}
\tikzset{bullet/.style={circle,fill,inner sep=2pt}}
\usepackage{adjustbox}
\usepackage{pgfplots}

\begin{document}
\let\WriteBookmarks\relax
\def\floatpagepagefraction{1}
\def\textpagefraction{.001}
\shorttitle{DRPOs}
\shortauthors{Singh et~al.}

\title [mode = title]{Distributionally Robust Profit Opportunities}                      



\author[1]{Derek Singh}
\cormark[1]
\ead{singh644@umn.edu}
\cortext[cor1]{Corresponding author}

\author[1]{Shuzhong Zhang}
\ead{zhangs@umn.edu}

\address[1]{Department of Industrial and Systems Engineering, University of Minnesota, Minneapolis, MN 55455}

\ExplSyntaxOn
\keys_set:nn { stm / mktitle } { nologo }
\ExplSyntaxOff

\begin{abstract}
This paper expands the notion of robust profit opportunities in financial markets to incorporate distributional uncertainty using Wasserstein distance as the ambiguity measure. Financial markets with risky and risk-free assets are considered. The infinite dimensional primal problems are formulated, leading to their simpler finite dimensional dual problems. A principal motivating question is how does distributional uncertainty help or hurt the robustness of the profit opportunity. Towards answering this question, some theory is developed and computational experiments are conducted. Finally some open questions and suggestions for future research are discussed.
\end{abstract}

\begin{keywords}
robust profit opportunities \sep Sharpe ratio \sep distributionally robust optimization \sep Wasserstein distance \sep Lagrangian duality
\end{keywords}

\maketitle

\section{Introduction and Overview}
\subsection{Profit Opportunities in Financial Markets}
\indent Modern financial markets cover a wide array of asset classes including (but not limited to) stocks, bonds, loans, money market instruments, currencies and commodities, real estate, derivatives, and so on. The concept of a profit opportunity (through favorable purchase and sale of securities) is as old as financial markets themselves. Various trading and investment strategies have been developed, using advances in technology and quantitative methodologies, to identify and monetize such profit opportunities in modern financial markets. Risk adjusted return is one class of performance metrics used to evaluate the attractiveness of such opportunites. One well known example of this is the Sharpe Ratio which looks at the ratio of expected excess return to risk as measured by variance of the return. A modern revision of this uses a benchmark index to measure excess return and its variance \cite{sharpe1994sharpe}. In \cite{chen2011all} the authors show the utility of this metric and its linkage to other risk metrics such as the Sortino ratio, Omega ratio, CVaR ratio, and others under a Q-radial distributional assumption for returns. \par

The notion of a robust profit opportunity (RPO) for risky assets and its relation to the Sharpe Ratio were first introduced and discussed in \cite{pinar2005robust}.
The RPO can be seen as a relaxation of the notion of an arbitrage opportunity towards one of statistical arbitrage; a term referring to arbitrage that is statistically likely but not certain to occur. The parameter $\theta$, which measures the robustness of the profit opportunity, quantifies the number of standard deviations the asset returns could drop and yet the investment would still break even or generate some profit. \par

The purpose of this work is to extend the notion of an RPO to a setting that incorporates ambiguity about the underlying distribution of risky asset returns. This is done via the framework of Wasserstein discrepancy between distributions and the corresponding infinite dimensional Lagrangian duality results. The first steps are to define a notion of distributionally robust profit opportunities and formulate a primal problem that measures the effect of ambiguity in distribution, as measured by $\delta$, on the degree of robustness as measured by $\theta$. With that in hand, next steps are to formulate and solve the simpler finite dimensional dual problems to quantify the lower and upper bounds for robustness $\theta$ as a function of ambiguity $\delta$. 
An outline of this paper is as follows. Section 1 gives on overview of the financial concepts of profit opportunities and robustness as well as a literature review. Section 2 develops the main theoretical results to characterize robust profit opportunities for financial markets with risky and risk free assets. Section 3 conducts a case study of distributionally robust profit opportunities using a five year historical data set of month end closing prices for a basket of exchange traded funds (ETFs) spread across different sectors of the economy. Section 4 discusses conclusions and suggestions for further research. All detailed proofs are deferred to the Appendix. \par

\subsection{Literature Review}
In conducting the literature review for this research, not many references were found that have investigated the topic of statistical arbitrage under distributional uncertainty. From Section 1.1 above, one can see that considerable research has been done in academic (and industry) circles regarding the classical notions of statistical arbitrage in financial markets. Indeed, several academic papers and financial textbooks have been written that cover these topics from their origin in the 1980s until today. It was surprising to us, at least, to find only a few papers that address and/or extend the classical notions of statistical arbitrage under the presence of some form of distributional uncertainty. This subsection gives an overview of what we found in the academic literature. \par
One seminal paper of note by Ostrovskii \cite{ostrovski2013stability} introduced the notion of robust arbitrage under distributional uncertainty.
Ostrovskii used the total variation (TV) metric to characterize a radius $\delta_{TV}$ such that all probability measures $Q'$ within this distance from a weak arbitrage free reference measure $Q$ are also weak arbitrage free. The author remarks that $\delta_{TV}$ can be interpreted as the minimal probability of success that a zero cost initial portfolio $w \in \mathbb{R}^n$ achieves positive value $w \cdot S_1$ at time 1. The main result (and intermediate results) relating $\delta_{TV}$ to the minimal probability of success are established via proof by contradiction using tools from probability theory and real analysis. This work was extended in \cite{singh2020robust} to consider the Wasserstein metric and investigate a relaxed notion of classical arbitrage defined as statistical arbitrage. \par
A recent paper, \cite{zhao2017distributionally}, investigated the behavior of reward-risk ratios, in particular the Sortino-Satchel and Stable Tail Adjusted Return ratios (both modern variations of the Sharpe ratio), under distributional uncertainty in the Wasserstein framework. The authors provide tractable convex dual reformulations of these infinite dimensional primal problems using recent results from \cite{Esfahani17} and \cite{gao2016distributionally}. The authors present an algorithm in detail to show how these tractable formulations can be solved using the bisection method. \par
In an earlier paper, \cite{pinar2005robust}, the authors introduced the notion of robust profit opportunities of degree $\theta$ which represent investment strategies that still return profit after $\theta=2$ or $\theta=3$ standard deviations in adverse price movement for the underlying risky securities. We have extended this notion to incorporate the concept of distributional ambiguity to conduct our investigation of distributional RPOs. In some sense our work is an integration and advancement of the concepts developed in the previous two works, namely those of \cite{pinar2005robust} and \cite{zhao2017distributionally}. This concludes our overview of the academic literature on notions of robust statistical arbitrage.

\subsection{Notation and Definitions}
This section lays out the notation and definitions used to develop our framework to investigate distributionally robust profit opportunities (DRPOs). The approach taken here is to start with the definition of an RPO and introduce a notion of distributional uncertainty via the Wasserstein distance metric. As such, we include definitions for these terms as well as some commentary on the problem of moments duality result used to formulate the dual problem for DRPOs.

\begin{remark}
The units for portfolio weight vector $w$ are number of shares of each security. The units for (random) security vector $S_1$ are the period 1 end values for security prices. 
\end{remark}

\subsubsection{Robust Profit Opportunities}
The sets of admissible risky portfolio weights for the weak and strong RPO conditions are 
\begin{align*}
 \Gamma^r_w &:= \{ w \in \mathbb{R}^n : w \cdot S_0 \leq 0; \:  w \neq 0 \}, \\
\Gamma^r_s &:= \{ w \in \mathbb{R}^n : w \cdot S_0 < 0 \},
\end{align*}
where $w \cdot S_0$ denotes $w^\top S_0$.
The RPO condition to be evaluated for covariance matrix $\Sigma$ for random vector $S_1$ under probability measure $Q$ for risky portfolios is
$w \cdot \bar{S}_1 - \theta \sqrt{w^\top \Sigma w} \\ \geq 0$.
where $\bar{S}_1 = \mathbb{E}^Q[ S_1 ]$ and $\theta$ denotes the degree of robustness (or level of risk-aversion). Note that portfolio weight vectors $w$ satisfy the homogeneity property (of degree zero) since $w \cdot \bar{S}_1 - \theta \sqrt{w^\top \Sigma w} \geq 0 \implies w_c \cdot \bar{S}_1 - \theta \sqrt{w_c^\top \Sigma w_c} \geq 0$ for $w_c = c w \; \text{and} \; c > 0$. It is the proportions of the holdings in the assets that distinguish $w$ vectors, not their absolute sizes.

For a given measure $Q$ and $\Gamma_ {s}$, no strong RPO (of level $\theta$) means that 
$\sup_{w \in \Gamma_{s}} w \cdot \bar{S}_1 - \theta \sqrt{w^\top \Sigma w} < 0$.
The empirical measure, $Q_N$, is defined as
$Q_N(dz) = \frac{1}{N} \sum_{i=1}^{N} \mathbbm{1}_{s_{(1,i)}} (dz)$.
To simplify the notation, the leading subscript on $s_{(1,i)}$ is suppressed and going forward we refer to the realization of time 1 asset value vector $s_{(1,i)}$ as just $s_i$. In the context of this work, the uncertainty set for probability measures is
$U_{\delta}(Q_N) \\ = \{Q: D_c(Q,Q_N) \leq \delta\}$
where $D_c$ is the optimal transport cost or Wasserstein discrepancy for cost function $c(\,)$ \cite {blanchetMV}.
The definition for $D_c$ is
\small
\begin{equation*}
D_c(Q,Q') = \inf \{ \mathbb{E}^\pi[c(A,B)]: \pi \in \mathcal{P}(\mathbb{R}^n \times \mathbb{R}^n), \pi_A = Q, \pi_B = Q' \} 
\end{equation*}
\normalsize
where $\mathcal{P}$ denotes the space of Borel probability measures and $\pi_A$ and $\pi_B$ denote the distributions of $A$ and $B$. 
Here $A$ denotes $S_A \in \mathbb{R}^n$ and $B$ denotes $S_B \in \mathbb{R}^n$ respectively.
This work uses the cost function $c$ where
$c(u,v) = \| u-v \|^2_2 = \langle u-v, u-v \rangle$.
The sets of admissible risky portfolio weights for the DRPO conditions (given a minimum target portfolio return $\alpha_0$) are 
\begin{align*}
\Gamma^d_w &:= \{ w \in \mathbb{R}^n : w \cdot S_0 \leq 0; \:  w \neq 0; \, \min_{U_{\delta}(Q_N)} \mathbb{E}^{Q}[ w \cdot S_1 ] \geq \alpha_0 \}, \\
\Gamma^d_s &:= \{ w \in \mathbb{R}^n : w \cdot S_0 < 0; \, \min_{U_{\delta}(Q_N)} \mathbb{E}^{Q}[ w \cdot S_1 ] \geq \alpha_0 \}. 
\end{align*}
Using Proposition 1 in \cite{blanchetMV}, these are equivalent to
\begin{align*}
\Gamma^d_w := \{ w \in \mathbb{R}^n : &w \cdot S_0 \leq 0; \:  w \neq 0; \, \mathbb{E}^{Q_N}[ w \cdot S_1 ] \geq \tilde{\alpha}_0 \}, \\
\Gamma^d_s := \{ w \in \mathbb{R}^n : &w \cdot S_0 < 0; \, \mathbb{E}^{Q_N}[ w \cdot S_1 ] \geq \tilde{\alpha}_0 \},
\end{align*}
where $\tilde{\alpha}_0 := \alpha_0 + \sqrt{\delta} \|w\|$. In our version of the problem we use the relaxation $\tilde{\alpha}_0 := \alpha_0$ which amounts to only requiring that the risky portfolio weights $w$ achieve the minimum target portfolio return $\alpha_0$ for the empirical distribution $Q_N$.

\subsubsection{Restatement of Problem of Moments Duality}
In Section 2 we formulate the primal problems for DRPOs for financial markets with risky securities. A key step in our approach is to use duality results to formulate the simpler yet equivalent dual problems. In this context, to enforce the moment constraint $\mathbb{E}^Q[ w \cdot S_1 ] = \alpha$ for $Q \in U_{\delta}(Q_N)$, we appeal to the strong duality of linear semi-infinite programs. The dual problem appears to be more tractable than the primal problem since it only involves the (finite dimensional) reference probability measure $Q_N$ as opposed to a continuum of probability measures. This allows us to solve a nested optimization problem under an empirical measure defined by the chosen data set. A brief restatement of this duality result follows next. See Appendix B of \cite{blanchet2019robust} and Proposition 2 of \cite{blanchetMV} for further details. \par
\textbf{The problem of moments}. Let $X$ be random vector in probability space $(\Omega, \mathcal{B},\mathcal{P})$ and $(\Omega, \mathcal{B},\mathcal{M}^+)$ where $\mathcal{P}$ and $\mathcal{M}^+$ denote the set of measures and non-negative measures respectively, such that Borel measurable functionals $\phi, f_1,\dots,f_k$ are integrable. Let $f = (f_1,\dots,f_k) : \Omega \rightarrow \mathbb{R}^k$ be a vector of moment functionals. For a real valued vector $q \in \mathbb{R}^k$, we are interested in the worst case bound
\begin{equation*}
v(q) := \sup \big( \: \mathbb{E}_\mu [\phi(X)] \: : \: \mathbb{E}_\mu [f(X)] = q \: ; \:\: \mu \in \mathcal{P} \: \big).
\end{equation*}
Adding a constant term by setting $f_0 = \mathbbm{1}_\Omega$, the constraint $\mathbb{E}_\mu[f_0(X)] = 1$, and defining $\tilde{f} = (f_0,f_1,\dots,f_k)$ and $\tilde{q} = (1,q_1,\dots,q_k)$ gives the following reformulation:
\begin{equation*}
v(q) := \sup \big( \: \int \phi(x) d\mu(x) : \int \tilde{f}(x) d\mu(x) = \tilde{q} \: ; \:\: \mu \in \mathcal{M}^+ \: \big).
\end{equation*}
If a certain Slater condition is satistifed, one has the equivalent dual representation for the above:
\begin{prop*}
Let $\mathcal{Q}_{\tilde{f}} = \{ \int \tilde{f}(x) d\mu(x) : \mu \in \mathcal{M}^+ \}$. If $\tilde{q}$ is an interior point of $\mathcal{Q}_{\tilde{f}}$ then 
\[
v(q) = \inf \big( \: \sum_{i=0}^k a_i q_i : \:\: a_i \in \mathbb{R} ; \:\:\: \sum_{i=0}^k a_i \tilde{f}_i(x) \geq \phi(x) \:\: \forall x \in \Omega \: \big).
\]
\end{prop*}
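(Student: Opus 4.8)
The statement is the strong-duality theorem for the generalized moment problem, so the plan is the classical two-part argument: establish \emph{weak} duality by a direct integration bound, and then close the gap with a convex-analytic separation argument that produces an optimal dual multiplier, with the interior-point hypothesis entering precisely where it is needed.

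\textbf{Weak duality.} First I would show that every dual-feasible vector dominates the primal value. Fix any $\mu \in \mathcal{M}^+$ with $\int \tilde{f}\, d\mu = \tilde{q}$ and any $a = (a_0,\dots,a_k)$ satisfying $\sum_{i=0}^k a_i \tilde{f}_i(x) \ge \phi(x)$ for all $x \in \Omega$. Integrating this pointwise inequality against $\mu \ge 0$ and using linearity gives
\[
\int \phi\, d\mu \le \int \sum_{i=0}^k a_i \tilde{f}_i\, d\mu = \sum_{i=0}^k a_i \int \tilde{f}_i\, d\mu = \sum_{i=0}^k a_i \tilde{q}_i .
\]
Taking the supremum over primal-feasible $\mu$ and the infimum over dual-feasible $a$ yields $v(q) \le \inf(\,\cdots\,)$.

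\textbf{Strong duality.} The substance is the reverse inequality. I would regard the optimal value as a function of the right-hand side, $V(p) := \sup\{\int \phi\, d\mu : \mu \in \mathcal{M}^+,\ \int \tilde{f}\, d\mu = p\}$, so that $V(\tilde{q}) = v(q)$. Concavity of $V$ is immediate: given near-optimal measures for $p_1$ and $p_2$, their convex combination is feasible for the corresponding convex combination of right-hand sides and its objective is the matching convex combination, because both $\mu \mapsto \int \tilde{f}\, d\mu$ and $\mu \mapsto \int \phi\, d\mu$ are linear and $\mathcal{M}^+$ is a convex cone. The interior-point (Slater) hypothesis, $\tilde{q} \in \operatorname{int}\mathcal{Q}_{\tilde{f}}$, guarantees that the feasible set is nonempty for all $p$ near $\tilde{q}$, i.e. $\tilde{q}$ lies in the interior of the effective domain of the proper concave function $V$. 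Hence $V$ admits a supergradient $a \in \mathbb{R}^{k+1}$ at $\tilde{q}$:
\[
V(p) \le V(\tilde{q}) + \langle a,\, p - \tilde{q}\rangle \qquad \forall p .
\]
I would then extract both dual feasibility and the exact objective value from this single inequality. Testing it at $p = c\,\tilde{f}(x)$ for a scaled Dirac mass $c\,\delta_x \in \mathcal{M}^+$ (so that $V(c\tilde{f}(x)) \ge c\,\phi(x)$), dividing by $c$ and letting $c \to \infty$ annihilates the constant term and yields $\phi(x) \le \langle a, \tilde{f}(x)\rangle = \sum_{i=0}^k a_i \tilde{f}_i(x)$ for every $x$, i.e. $a$ is dual feasible. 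Testing instead at $p = 0$ and using $V(0) \ge 0$ (the zero measure is feasible there) gives $\langle a, \tilde{q}\rangle \le V(\tilde{q}) = v(q)$. Thus $a$ is a dual-feasible point whose objective $\sum_{i=0}^k a_i \tilde{q}_i = \langle a, \tilde{q}\rangle$ is at most $v(q)$, which together with weak duality forces equality and identifies $a$ as dual optimal.

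\textbf{Main obstacle.} The only delicate step is the existence of the supergradient, and this is exactly where the interior-point assumption is indispensable: without it $\tilde{q}$ could sit on the boundary of $\operatorname{dom} V$, where a supporting hyperplane to the hypograph of $V$ need not exist (or may be vertical), and the duality gap can be strictly positive. Making this rigorous requires verifying that $V$ is \emph{proper} --- finite at $\tilde{q}$ --- which I would obtain from the integrability of $\phi, f_1,\dots,f_k$ together with the assumption $v(q) < \infty$; the degenerate case $v(q) = +\infty$ forces the dual to be infeasible and the identity to hold trivially as $+\infty = +\infty$. Equivalently, one may run the argument as a direct separating-hyperplane construction in $\mathbb{R}^{k+1}\times\mathbb{R}$, separating the point $(\tilde{q}, v(q)+\varepsilon)$ from the convex set $\{(\int \tilde{f}\, d\mu, t) : \mu \in \mathcal{M}^+,\ t \le \int \phi\, d\mu\}$ and checking, via the interior-point condition, that the separating functional has a strictly positive last coordinate so that it can be normalized to recover $a$.
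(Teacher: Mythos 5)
Your proof is correct, but a direct comparison is not quite possible: the paper never proves this proposition. It is explicitly a \emph{restatement} of known problem-of-moments duality, deferred to Appendix B of \cite{blanchet2019robust} and Proposition 2 of \cite{blanchetMV}, and the paper only uses it as an imported tool. What you have supplied is essentially the classical (Isii-type) supporting-hyperplane proof that those references rely on: weak duality by integrating the pointwise dual constraint against a feasible $\mu \in \mathcal{M}^+$; then strong duality by viewing the optimal value as a concave function $V(p)$ of the right-hand side, invoking $\tilde{q} \in \operatorname{int}\mathcal{Q}_{\tilde{f}}$ to obtain a supergradient $a$ at $\tilde{q}$, and testing the supergradient inequality at scaled Diracs $c\,\delta_x$ (which lie in $\mathcal{M}^+$ because $\phi, f_1,\dots,f_k$ are real-valued, so dividing by $c$ and letting $c \to \infty$ yields dual feasibility of $a$) and at the zero measure (which yields $\langle a, \tilde{q}\rangle \le v(q)$, closing the gap). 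Three points deserve tightening if this were written out in full. First, the degenerate case $v(q) = +\infty$: your dispatch is right, since weak duality shows any dual-feasible $a$ would force $v(q) \le \langle a, \tilde{q}\rangle < \infty$, so the dual feasible set is empty and both sides equal $+\infty$. Second, properness of $V$ when $v(q) < \infty$: you should note the standard fact that a concave function taking the value $+\infty$ anywhere equals $+\infty$ on the relative interior of its domain, so finiteness of $V$ at the interior point $\tilde{q}$ already makes $V$ proper, and the supergradient then exists. Third, a purely notational point in the statement itself: the dual objective $\sum_{i=0}^k a_i q_i$ must be read with $q_0 = 1$, i.e.\ as $\langle a, \tilde{q}\rangle$; your use of $\tilde{q}_i$ in the weak-duality step is the correct interpretation. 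With those remarks, your argument is a sound, self-contained proof of exactly the result the paper cites.
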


The primal problem is concerned with the worst case expected loss for some objective function $\phi$, under moment constraints.
Note that the primal problem is an infinite dimensional stochastic optimization problem and thus difficult to solve directly. The simplicity and tractability of the dual problem make it quite attractive as an analytical and/or computational tool in our toolkit. \par
The above duality result has been applied by Blanchet et.\,al and many other authors on topics in data driven distributionally robust stochastic optimization such as robust machine learning, portfolio selection, and risk management. For these types of robust optimization problems, the incorporation of distributional uncertainty can be viewed as adding a penalty term (similar to penalized regression) to the optimal solution \cite{blanchetMV}. This gives us a nice intuitive way to think about the \textit{cost} of robustness.

\section{Theory: DRPOs}
This section develops the theory for DRPOs in financial markets with (only) risky assets. Extending the framework to handle markets with risk free assets is quite tractable; however, it has been omitted due to space constraints. Let us focus on the strong conditions (the weak conditions are similar). Both worst case and best case DRPO conditions are developed. Section 2.1 deals with the worst case conditions, meaning that DRPOs of \textit{at least} level $\theta^{wc}$ exist. The primal problem is formulated using the notions discussed in Section 1.3.1. The dual problem is formulated using the problem of moments duality result from Section 1.3.2. Note that the dual problem is a nested stochastic optimization problem. The inner problem (evaluating $\Psi_{\lambda,w}$) and middle problem (evaluating the dual objective function over $\inf_{\lambda_1 \geq 0, \lambda_2}$) can be solved jointly using the techniques from Proposition 3 in \cite{blanchetMV}. Finally, the outer optimization problem (evaluating over $\inf_{w \in \Gamma^d_{s}}$), for the strong case, can be formulated as a finite dimensional convex optimization problem. A similar approach is taken in Section 2.2 for the best case conditions, meaning that DRPOs of \textit{at most} level $\theta^{bc}$ exist. This machinery gives us a practical approach to explore applications of our DRPO framework. Section 2.3 shows how to incorporate portfolio restrictions (such as short sales) in a straightforward manner. \par

\begin{remark}
For our problem setting, the covariance matrix $\Sigma$ is assumed to be positive definite under the reference probability measure $Q_N$. Furthermore, the portfolio is assumed to consist of $n \geq 1$ risky securities (excluding the risk-free security), with short sales allowed. 
\end{remark}

\subsection{Worst Case DRPO Conditions}
We extend the approach in \cite{pinar2005robust} to arrive at these DRPO conditions. The authors define an RPO of degree $\theta$ as a portfolio $w \in \mathbb{R}^n$ that satisfies $w \cdot \bar{S}_1 - \theta \sqrt{ w^\top \Sigma w } \geq 0$ \: and \: $w \cdot S_0 < 0$. The authors comment that RPO is related to the notions of \textit{risk-adjusted return} and \textit{Sharpe ratio}. The first condition is equivalent to $\frac{w \cdot \bar{S}_1}{\sqrt{ w^\top \Sigma w }} \geq \theta$. Adding the normalization constraint $w \cdot \bar{S}_1 = \alpha$ for $\alpha > 0$ and simplifying gives $w^\top \Sigma w \leq g_\alpha(\theta)$ where $g_\alpha(\theta) := \alpha^2/\theta^2$. Furthermore, the normalization $w \cdot \bar{S}_1 = \alpha \implies \mathbb{E}^Q[ w \cdot S_1 ] = \alpha$.
For minimum target portfolio return $\alpha_0$, the strong \textit{worst case} DRPO condition can be expressed as
\small
\begin{equation*}\label{strongPwc}
\inf_{ w \in \Gamma^d_s } \: \max_{\alpha \geq \alpha_0} \: { \sup_{ Q \in U_{\delta}(Q_N); \: \mathbb{E}^Q[ w \cdot S_1 ] = \alpha } (w^\top \Sigma w) } \: \leq \: g_\alpha(\theta^{wc}). \:\:  \tag{P\textsuperscript{wc}}
\end{equation*}
\normalsize
Using Proposition 2 in \cite{blanchetMV} which invokes problem of moments duality (see Section 1.3.2), the dual formulation for the inner optimization problem 
\begin{equation*}\label{Iwc}
 \sup_{ Q \in U_{\delta}(Q_N); \:\: \mathbb{E}^Q[ w \cdot S_1 ] = \alpha } (w^\top \mathbb{E}^Q[ S_1 S_1^\top ] w) \tag{I\textsuperscript{wc}}
\end{equation*}
where $w^\top \Sigma w = (w^\top \mathbb{E}^Q[ S_1 S_1^\top ] w) - \alpha^2$ is 
\begin{equation*}\label{strongDwc}
\inf_{ \lambda_1 \geq 0, \lambda_2 } \: [ \: \lambda_1 \delta + \lambda_2 \alpha + \frac{1}{N} \sum_{i=1}^N \Psi^{wc}_{\lambda,w} (s_i) \: ] \tag{D\textsuperscript{wc}}
\end{equation*}
where $\Psi^{wc}_{\lambda,w}$ is defined, in terms of cost function $c(\,)$, as
$\Psi^{wc}_{\lambda,w}(s_i) \\ = \sup_{\tilde{s} \in \mathbb{R}^n} [ \, ( w \cdot \tilde{s} )^2 - \lambda_1 c( \tilde{s}, s_i) - \lambda_2 (w \cdot \tilde{s}) \, ]$.

\subsubsection{Inner and Middle Optimization Problems}
The goal here is to evaluate 
\begin{equation} 
\inf_{ \lambda_1 \geq 0, \lambda_2 } \: [ \: \lambda_1 \delta + \lambda_2 \alpha + \frac{1}{N} \sum_{i=1}^N \Psi^{wc}_{\lambda,w} (s_i) \: ] 
\end{equation}
in closed form. Using Proposition 3 and Theorem 1 in \cite{blanchetMV} it follows that when
\small
\[
\delta \| w \|^2_2 - ( \alpha - \mathbb{E}^{Q_N}[w \cdot S_1] )^2 \geq 0 \implies \sup_{ Q \in U_{\delta}(Q_N); \\ \mathbb{E}^Q[ w \cdot S_1 ] = \alpha } (w^\top \mathbb{E}^Q[ S_1 S_1^\top ] w)
\] 
\normalsize
\text{is feasible,}
then for $w^\top \Sigma w = (w^\top \mathbb{E}^Q[ S_1 S_1^\top ] w) - \alpha^2$
\[
\max_{\alpha \geq \alpha_0; \: \delta \| w \|^2_2 - ( \alpha - \mathbb{E}^{Q_N}[w \cdot S_1] )^2 \geq 0} \:\: \sup_{ Q \in U_{\delta}(Q_N); \:\: \mathbb{E}^Q[ w \cdot S_1 ] = \alpha } (w^\top \Sigma w)
\]
evaluates to
\begin{equation}
\big(\sqrt{w^\top \Sigma w} + \sqrt{\delta} \|w\|_2 \big)^2
\end{equation}
where $\Sigma$ is evaluated under the reference measure $Q_N$ and the optimal $\alpha^* := \mathbb{E}^{Q_N}[ w \cdot S_1 ] \geq \alpha_0$. 

\subsubsection{Outer Optimization Problem}
The strong \textit{worst case} DRPO condition (\ref{strongPwc}) is \textit{now} 
\small
\begin{equation*}\label{s2wc}
v^{wc}_{\alpha^*}(\delta) := \inf_{ w \in \Gamma^d_s } \: \big(\sqrt{w^\top \Sigma w} + \sqrt{\delta} \|w\|_2 \big)^2 \leq g_{\alpha^*}( \theta^{wc}). \tag{D2\textsuperscript{wc}}
\end{equation*}
\normalsize
\begin{theorem}
$v^{wc}_{\alpha^*}(\delta)$ can be computed by solving convex nonlinear program (NLP) N\_SRPO\textsuperscript{wc} (listed below).
\end{theorem}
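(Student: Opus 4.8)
The plan is to show that the outer problem (\ref{s2wc}) is a convex program in disguise and then cast it in the conic form that defines N\_SRPO\textsuperscript{wc}. The first observation is that the objective $f(w) := \big(\sqrt{w^\top \Sigma w} + \sqrt{\delta}\,\|w\|_2\big)^2$ is the square of the nonnegative function $h(w) := \sqrt{w^\top \Sigma w} + \sqrt{\delta}\,\|w\|_2$. Since $t \mapsto t^2$ is increasing on $[0,\infty)$ and $h \geq 0$, minimizing $f$ over any set is equivalent to minimizing $h$ and then squaring the optimal value, so I would first replace the objective by $h$ and recover $v^{wc}_{\alpha^*}(\delta) = \big(\inf h\big)^2$ at the end.

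Second, I would establish convexity. Because $\Sigma \succ 0$ under $Q_N$ (Remark 2), the map $w \mapsto \sqrt{w^\top \Sigma w} = \|\Sigma^{1/2} w\|_2$ is a genuine norm and hence convex, and $w \mapsto \sqrt{\delta}\,\|w\|_2$ is convex as a nonnegative multiple of a norm; their sum $h$ is therefore convex and, since $\sqrt{w^\top\Sigma w} \geq \sqrt{\lambda_{\min}(\Sigma)}\,\|w\|_2$, also coercive. The feasible set is cut out by the two linear constraints $w \cdot S_0 < 0$ and $w \cdot \bar S_1 \geq \alpha_0$ (using the relaxation $\tilde\alpha_0 = \alpha_0$), hence convex. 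Introducing epigraph variables $u \geq \|\Sigma^{1/2} w\|_2$ and $r \geq \sqrt{\delta}\,\|w\|_2$ turns the problem into the minimization of $u + r$ subject to two second-order-cone constraints and the linear portfolio constraints; this is exactly the convex NLP N\_SRPO\textsuperscript{wc}, and one sets $v^{wc}_{\alpha^*}(\delta) = (u^* + r^*)^2$.

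The main obstacle is that $\Gamma^d_s$ is defined by the \emph{strict} inequality $w \cdot S_0 < 0$, so it is not closed and the infimum need not be attained in $\Gamma^d_s$ itself; a solver must instead work over the closed relaxation $\overline{\Gamma}^d_s := \{w : w \cdot S_0 \leq 0,\ w \cdot \bar S_1 \geq \alpha_0\}$. I would resolve this by showing $\inf_{\Gamma^d_s} h = \inf_{\overline{\Gamma}^d_s} h$. Trivially $\inf_{\Gamma^d_s} h \geq \inf_{\overline{\Gamma}^d_s} h$. For the reverse, assuming a Slater point $w_1$ with $w_1 \cdot S_0 < 0$ and $w_1 \cdot \bar S_1 \geq \alpha_0$ exists (i.e.\ a strong DRPO candidate is feasible), any boundary point $w_0$ with $w_0 \cdot S_0 = 0$ is approached along the segment $w_t = (1-t)w_0 + t w_1$, which satisfies $w_t \cdot S_0 = t\,(w_1 \cdot S_0) < 0$ and $w_t \cdot \bar S_1 \geq \alpha_0$ for every $t \in (0,1]$; continuity of $h$ then gives $\inf_{\Gamma^d_s} h \leq h(w_0)$, so $\Gamma^d_s$ is dense in $\overline{\Gamma}^d_s$ and the two infima coincide. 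Coercivity of $h$ together with closedness and nonemptiness of $\overline{\Gamma}^d_s$ guarantees the convex NLP attains its minimum at some $w^*$, yielding the correct value $v^{wc}_{\alpha^*}(\delta)$. Finally, reading off $\alpha^* = \bar S_1 \cdot w^*$ (the optimizer identified in the inner/middle analysis above) and inverting $v^{wc}_{\alpha^*}(\delta) \leq g_{\alpha^*}(\theta^{wc}) = (\alpha^*)^2/(\theta^{wc})^2$ recovers the worst-case robustness level $\theta^{wc}$.
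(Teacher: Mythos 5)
Your proof is correct, and its core coincides with the paper's: both drop the square via monotonicity of $t \mapsto t^2$ on $[0,\infty)$, encode the moment constraint $\mathbb{E}^{Q_N}[w \cdot S_1] \geq \tilde{\alpha}_0$ as a linear inequality, and identify the resulting problem as a convex second-order cone program (you make this explicit with epigraph variables $u \geq \|\Sigma^{1/2}w\|_2$, $r \geq \sqrt{\delta}\,\|w\|_2$ and convexity of norms under $\Sigma \succ 0$; the paper simply asserts the SOCP structure). The one genuine divergence is the treatment of the strict inequality $w \cdot S_0 < 0$ defining $\Gamma^d_s$. The paper replaces it by $w \cdot S_0 \leq -\epsilon$ for a ``suitably small'' $\epsilon > 0$: this keeps every feasible point strictly inside $\Gamma^d_s$, so any solution is a genuine strong portfolio and the feasible set is closed, but for fixed $\epsilon$ the NLP computes the infimum over a strictly smaller set, i.e.\ an upper bound on $v^{wc}_{\alpha^*}(\delta)$ that converges to the true value only as $\epsilon \to 0$. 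You instead pass to the closed relaxation $w \cdot S_0 \leq 0$ and prove, via a Slater point, a segment/density argument, and continuity of the objective, that the two infima coincide; combined with your coercivity observation (valid since $\lambda_{\min}(\Sigma) > 0$ by the paper's positive-definiteness assumption), this shows the relaxed convex NLP attains its minimum and returns exactly $v^{wc}_{\alpha^*}(\delta)$. The trade-off is that your minimizer may land on the boundary $w^* \cdot S_0 = 0$, so it certifies the value but not necessarily a portfolio lying in $\Gamma^d_s$, whereas the paper's $\epsilon$-device does the opposite: it certifies strict feasibility at the price of an $\epsilon$-approximation of the value. Both readings are legitimate, since the theorem only claims computability of $v^{wc}_{\alpha^*}(\delta)$, and your version makes rigorous a point the paper leaves implicit.
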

\noindent Note this is essentially a second order conic program (SOCP).
\begin{mini}|l|
{\substack{w \in \mathbb{R}^n}}
{ \sqrt{w^\top \Sigma w} + \sqrt{\delta} \|w\|_2 }{\label{N_SRPO}}
{}
\addConstraint{w \cdot S_0}{\leq - \epsilon}
\addConstraint{\frac{1}{N} \sum_{i=1}^N w \cdot s_i}{\geq \tilde{\alpha}_0}.
\end{mini}

\begin{proof}
The formulation is straightforward. The constraint set $w \in \Gamma^d_s$ is readily obtained via the constraint $w \cdot S_0 \leq - \epsilon$ for a suitably small choice of $\epsilon > 0$. The first moment constraint $\mathbb{E}^{Q_N}[ w \cdot S_1 ] \geq \tilde{\alpha}_0$ is described as above. The squaring in the original objective function does not change the optimal solution. It follows that N\_SRPO\textsuperscript{wc} is a convex SOCP, solvable via standard solvers.
\end{proof}


\begin{theorem}
For a given $\theta^{wc}$, the critical radius $\delta^{wc}_{\alpha^*}$ can be expressed as $\inf \{\delta \geq 0 :v^{wc}_{\alpha^*}(\delta) \geq g_{\alpha^*}(\theta^{wc})\}$. Furthermore, $\delta^{wc}_{\alpha^*}$ can be explicitly computed via binary search. Let $\delta_{\alpha^*} < \delta^{wc}_{\alpha^*}$. For $Q \in U_{\delta_{\alpha^*}}(Q_N)$, it follows that $Q$ admits strong RPOs of \textit{at least} level $\theta^{wc}$. For $Q \notin U_{\delta^{wc}_{\alpha^*}}(Q_N)$, it follows that $Q$ \textit{may} admit strong RPOs of levels less than $\theta^{wc}$.
\end{theorem}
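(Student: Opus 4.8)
The plan is to make monotonicity and continuity of the map $\delta \mapsto v^{wc}_{\alpha^*}(\delta)$ the engine of the whole argument. First I would note that, under the relaxation $\tilde\alpha_0 = \alpha_0$, the feasible set $\Gamma^d_s$ does not depend on $\delta$; hence for each fixed $w$ the integrand $(\sqrt{w^\top\Sigma w} + \sqrt\delta\|w\|_2)^2$ is continuous and nondecreasing in $\delta$ (strictly increasing for $w\neq0$, since $\Sigma\succ0$), and the pointwise infimum over $w$ inherits monotonicity. Continuity I would obtain from attainment of the inner minimum: the objective is coercive because $(\sqrt{w^\top\Sigma w}+\sqrt\delta\|w\|_2)^2 \ge \lambda_{\min}(\Sigma)\|w\|_2^2$, and the feasible set is closed in the $w\cdot S_0 \le -\epsilon$ form of N\_SRPO\textsuperscript{wc}, so a routine subsequential argument on the minimizers gives lower semicontinuity, while upper semicontinuity is automatic for an infimum of continuous maps.

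Next I would pin down the critical radius. The return constraint forces $\alpha_0 \le w\cdot\mathbb{E}^{Q_N}[S_1] \le \|w\|_2\|\mathbb{E}^{Q_N}[S_1]\|_2$, so $\kappa := \inf_{w\in\Gamma^d_s}\|w\|_2 \ge \alpha_0/\|\mathbb{E}^{Q_N}[S_1]\|_2 > 0$ and therefore $v^{wc}_{\alpha^*}(\delta) \ge \delta\kappa^2 \to \infty$. Thus $\{\delta\ge0 : v^{wc}_{\alpha^*}(\delta) \ge g_{\alpha^*}(\theta^{wc})\}$ is nonempty with finite infimum, which is exactly $\delta^{wc}_{\alpha^*}$; continuity and monotonicity make this set the closed half-line $[\delta^{wc}_{\alpha^*},\infty)$ with $v^{wc}_{\alpha^*}(\delta^{wc}_{\alpha^*}) = g_{\alpha^*}(\theta^{wc})$. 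The same bound supplies the explicit upper bracket $\delta^{wc}_{\alpha^*} \le g_{\alpha^*}(\theta^{wc})/\kappa^2$, so binary search on $[0, g_{\alpha^*}(\theta^{wc})/\kappa^2]$ — evaluating $v^{wc}_{\alpha^*}$ at each midpoint via the SOCP N\_SRPO\textsuperscript{wc} and comparing with the threshold — converges to $\delta^{wc}_{\alpha^*}$ by monotonicity.

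For the guarantee, fix $\delta_{\alpha^*} < \delta^{wc}_{\alpha^*}$. Since $\delta_{\alpha^*}$ lies strictly below the infimum it is not a member of the defining set, so $v^{wc}_{\alpha^*}(\delta_{\alpha^*}) < g_{\alpha^*}(\theta^{wc})$, attained at some $w^* \in \Gamma^d_s$. By the inner/middle analysis of Section 2.1.1, the value $\big(\sqrt{(w^*)^\top\Sigma w^*} + \sqrt{\delta_{\alpha^*}}\|w^*\|_2\big)^2 = v^{wc}_{\alpha^*}(\delta_{\alpha^*})$ is the supremum of the portfolio second-moment/variance expression over all $Q \in U_{\delta_{\alpha^*}}(Q_N)$ at the maximizing mean $\alpha^* = \mathbb{E}^{Q_N}[w^*\cdot S_1] \ge \alpha_0$. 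Being a supremum over the ball, it dominates the realized variance of any fixed $Q \in U_{\delta_{\alpha^*}}(Q_N)$, so $(w^*)^\top\Sigma_Q w^* \le v^{wc}_{\alpha^*}(\delta_{\alpha^*}) < g_{\alpha^*}(\theta^{wc}) = (\alpha^*)^2/(\theta^{wc})^2$; together with $w^*\cdot S_0 < 0$ and the target-return certificate $\alpha^* \ge \alpha_0$ this yields $\alpha^*\big/\sqrt{(w^*)^\top\Sigma_Q w^*} > \theta^{wc}$, i.e.\ $w^*$ certifies a strong RPO of at least level $\theta^{wc}$ for $Q$. For the last sentence, if $D_c(Q,Q_N) > \delta^{wc}_{\alpha^*}$ then with $\delta' := D_c(Q,Q_N)$ monotonicity gives $v^{wc}_{\alpha^*}(\delta') \ge g_{\alpha^*}(\theta^{wc})$, so the worst-case program no longer certifies level $\theta^{wc}$ over $U_{\delta'}(Q_N) \ni Q$; the certified worst-case robustness falls to at most $\theta^{wc}$, which is why such $Q$ \emph{may} admit only lower-level RPOs — the hedging reflecting that the adversarial measure realizing the worst case need not be $Q$ itself.

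The main obstacle I anticipate is the faithful passage from the scalar worst-case value $v^{wc}_{\alpha^*}(\delta)$ to a statement valid for each individual $Q$ in the ball. The variance domination is immediate, but the robustness ratio also involves the mean $\mathbb{E}^Q[w^*\cdot S_1]$, which under the relaxation $\tilde\alpha_0 = \alpha_0$ is controlled only at the empirical level $\mathbb{E}^{Q_N}[w^*\cdot S_1] = \alpha^*$ and may differ from its value under a generic $Q$ by as much as $\sqrt{\delta}\|w^*\|_2$. Carefully reconciling this mean discrepancy with the certificate $\alpha^*/\sqrt{(w^*)^\top\Sigma_Q w^*} \ge \theta^{wc}$ — and recognizing that the worst-case mean and worst-case variance are generally attained by different members of $U_\delta(Q_N)$ — is the delicate point, and it is precisely this asymmetry that forces the \emph{may} in the final claim.
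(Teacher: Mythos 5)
Your handling of the core of the theorem---monotonicity of $\delta \mapsto v^{wc}_{\alpha^*}(\delta)$, divergence of $v^{wc}_{\alpha^*}(\delta)$ as $\delta \to \infty$, and the resulting bisection scheme---is correct and takes the same route as the paper; indeed it is more complete, since the paper simply asserts the monotonicity and the asymptotics, whereas you derive them: the Cauchy--Schwarz bound $\|w\|_2 \ge \alpha_0/\|\mathbb{E}^{Q_N}[S_1]\|_2 =: \kappa > 0$ on $\Gamma^d_s$ gives $v^{wc}_{\alpha^*}(\delta) \ge \delta\kappa^2 \to \infty$, and it also yields the explicit bracket $[0,\, g_{\alpha^*}(\theta^{wc})/\kappa^2]$ for the binary search, which the paper does not supply. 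Your continuity argument (coercivity, attainment, lower semicontinuity plus upper semicontinuity of an infimum of continuous maps) is also sound.

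The genuine gap is in your certification step for the sentence ``$Q \in U_{\delta_{\alpha^*}}(Q_N)$ admits strong RPOs of at least level $\theta^{wc}$.'' From $(w^*)^\top\Sigma_Q w^* \le v^{wc}_{\alpha^*}(\delta_{\alpha^*}) < g_{\alpha^*}(\theta^{wc})$ you conclude $\alpha^*/\sqrt{(w^*)^\top\Sigma_Q w^*} > \theta^{wc}$, but the strong RPO condition under $Q$ is $\mathbb{E}^{Q}[w^*\cdot S_1] - \theta^{wc}\sqrt{(w^*)^\top\Sigma_Q w^*} \ge 0$, with the mean taken under $Q$, not under $Q_N$ (moreover, the variance-domination step itself only applies to those $Q$ whose mean under $w^*$ is at least $\alpha_0$, since the supremum is taken subject to the moment constraint). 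This is not cosmetic: writing $s := \sqrt{\delta_{\alpha^*}}\,\|w^*\|_2$, the same inner/middle machinery shows that the largest variance among measures in the ball with $\mathbb{E}^Q[w^*\cdot S_1] = \alpha^* - x$ is $\big(\sqrt{(w^*)^\top\Sigma w^*} + \sqrt{s^2 - x^2}\big)^2$, so for $w^*$ to certify every such $Q$ one needs $\alpha^* - x \ge \theta^{wc}\big(\sqrt{(w^*)^\top\Sigma w^*} + \sqrt{s^2-x^2}\big)$ for all $x\in[0,s]$, i.e.\ $\alpha^* \ge \theta^{wc}\sqrt{(w^*)^\top\Sigma w^*} + s\sqrt{1+(\theta^{wc})^2}$; your hypothesis only delivers $\alpha^* > \theta^{wc}\sqrt{(w^*)^\top\Sigma w^*} + \theta^{wc}s$, which is strictly weaker because $\sqrt{1+(\theta^{wc})^2} > \theta^{wc}$. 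Hence there are measures in the ball for which $w^*$ fails level $\theta^{wc}$, and your argument supplies no alternative certificate for them. You do flag exactly this mean discrepancy in your closing paragraph, but you attribute it to the final ``may'' sentence---where no proof burden exists---when in fact it undercuts your positive claim. For comparison, the paper never attempts a per-$Q$ certificate: its proof of these sentences is a one-line appeal to condition (D2\textsuperscript{wc}) and its own definition of DRPOs, under which ``level $\theta$'' is measured against the empirical-mean normalization $g_{\alpha^*}$; the paper thus resolves the mismatch by definition rather than by argument, and your attempt, while incomplete, probes a real looseness in the paper's own formulation.
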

\begin{proof}
This characterization of the critical radius $\delta^{wc}_{\alpha^*}$ follows from the condition (\ref{s2wc}) as well as the definition of DRPOs (see Section 1.3.1). The asymptotic properties of $v^{wc}_{\alpha^*}$ are such that $v^{wc}_{\alpha^*}(0) \geq 0$ and $\lim_{\delta \to \infty} v^{wc}_{\alpha^*}(\delta) \geq g_{\alpha^*}(\theta^{wc})$. Furthermore, since $v^{wc}_{\alpha^*}(\delta)$ is a non-decreasing function of $\delta$, it follows that $\delta^{wc}_{\alpha^*}$ can be computed via binary search.
\end{proof}

\begin{remark}
One can view the critical radius $\delta^{wc}_{\alpha^*}$ as a relative measure of the \textit{degree} of strong RPO in the reference measure $Q_N$. Those $Q_N$ which are ``close" to admitting RPOs of level less than $\theta^{wc}$ will have a relatively smaller value of $\delta^{wc}_{\alpha^*}$.
\end{remark}

\subsection{Best Case DRPO Conditions}
We follow the approach from the previous subsection. To reflect the base case outcome (inside the Wasserstein ball of probability measures of radius $\delta$), replace the $\sup$ with $\inf$ and $\max$ with $\min$.
The strong \textit{best case} DRPO condition is
\small
\begin{equation*}\label{Pbc}
\inf_{ w \in \Gamma^d_s } \: \min_{\alpha \geq \alpha_0} \: { \inf_{ Q \in U_{\delta}(Q_N); \:\: \mathbb{E}^Q[ w \cdot S_1 ] = \alpha } (w^\top \Sigma w) } \:\: \geq \:\: g_\alpha(\theta^{bc}) \tag{P\textsuperscript{bc}}
\end{equation*}
\normalsize
Using Proposition 2 in \cite{blanchetMV} which invokes problem of moments duality (see Section 1.3.2), the dual formulation for the inner optimization problem
\begin{equation*}\label{Ibc}
 \sup_{ Q \in U_{\delta}(Q_N); \:\: \mathbb{E}^Q[ w \cdot S_1 ] = \alpha } - (w^\top \mathbb{E}^Q[ S_1 S_1^\top ] w) \tag{I\textsuperscript{bc}}
\end{equation*}
where $w^\top \Sigma w = (w^\top \mathbb{E}^Q[ S_1 S_1^\top ] w) - \alpha^2$ is 
\begin{equation*}\label{Dbc}
\inf_{ \lambda_1 \geq 0, \lambda_2 } \: [ \: \lambda_1 \delta + \lambda_2 \alpha + \frac{1}{N} \sum_{i=1}^N \Psi^{bc}_{\lambda,w} (s_i) \: ] \tag{D\textsuperscript{bc}}
\end{equation*}
where $\Psi^{bc}_{\lambda,w}$ is defined, in terms of cost function $c(\,)$, as \\
$\Psi^{bc}_{\lambda,w}(s_i) = \sup_{\tilde{s} \in \mathbb{R}^n} [ \, -( w \cdot \tilde{s} )^2 - \lambda_1 c( \tilde{s}, s_i) - \lambda_2 (w \cdot \tilde{s}) \, ]$.

\subsubsection{Inner and Middle Optimization Problems}
The goal here is to evaluate 
\[ 
- \left\{ \inf_{ \lambda_1 \geq 0, \lambda_2 } \: \bigg[ \: \lambda_1 \delta + \lambda_2 \alpha + \frac{1}{N} \sum_{i=1}^N \Psi^{bc}_{\lambda,w} (s_i) \: \bigg] \right\} 
\]
in closed form. 
\begin{prop}
Using techniques from Proposition 3 and Theorem 1 in \cite{blanchetMV} it follows that when
\begin{align*}
\delta \| w \|^2_2 - ( \alpha - \mathbb{E}^{Q_N}[w \cdot S_1] )^2 \geq 0 \implies \\
\left\{ \sup_{ Q \in U_{\delta}(Q_N); \:\: \mathbb{E}^Q[ w \cdot S_1 ] = \alpha } - (w^\top \mathbb{E}^Q[ S_1 S_1^\top ] w) \right\}
\end{align*}
\text{is feasible,}
then for $w^\top \Sigma w = (w^\top \mathbb{E}^Q[ S_1 S_1^\top ] w) - \alpha^2$
\begin{equation}
\min_{\alpha \geq \alpha_0; \: \delta \| w \|^2_2 - ( \alpha - \mathbb{E}^{Q_N}[w \cdot S_1] )^2 \geq 0} \:\: \inf_{ Q \in U_{\delta}(Q_N); \:\: \mathbb{E}^Q[ w \cdot S_1 ] = \alpha } (w^\top \Sigma w)
\end{equation}
evaluates to
\begin{equation}
\max{\big(\sqrt{w^\top \Sigma w} - \sqrt{\delta} \|w\|_2, 0 \big)}^2
\end{equation}
where $\Sigma$ is evaluated under the reference measure $Q_N$ and the optimal $\alpha^* := \mathbb{E}^{Q_N}[ w \cdot S_1 ] \geq \alpha_0$.  
\end{prop}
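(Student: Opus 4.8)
The plan is to mirror the worst-case derivation of Section 2.1.1, this time applying the problem-of-moments duality (Proposition 2 and Theorem 1 of \cite{blanchetMV}) to the maximization form~(\ref{Ibc}) and then negating. Since the moment constraint fixes $\mathbb{E}^Q[w \cdot S_1] = \alpha$, minimizing $w^\top \Sigma w$ over $Q \in U_\delta(Q_N)$ is the same as minimizing $w^\top \mathbb{E}^Q[S_1 S_1^\top] w = \mathbb{E}^Q[(w \cdot S_1)^2]$, which equals the negative of the value in~(\ref{Ibc}). By strong duality, under the stated feasibility condition $\delta\|w\|_2^2 - (\alpha - \mathbb{E}^{Q_N}[w\cdot S_1])^2 \geq 0$, the value of~(\ref{Ibc}) equals that of~(\ref{Dbc}); hence the best-case variance at a fixed $\alpha$ is the negative of~(\ref{Dbc}) minus $\alpha^2$. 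The whole task reduces to evaluating~(\ref{Dbc}) in closed form and then carrying out the outer minimization over $\alpha$.

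First I would evaluate the inner supremum $\Psi^{bc}_{\lambda,w}(s_i)$. Writing $\tilde s = s_i + u$ and using $c(\tilde s, s_i) = \|u\|_2^2$, the objective $-(w\cdot s_i + w\cdot u)^2 - \lambda_1\|u\|_2^2 - \lambda_2(w\cdot s_i + w\cdot u)$ depends on $u$ only through $w\cdot u$ together with the quadratic penalty $\|u\|_2^2$, so the component of $u$ orthogonal to $w$ vanishes at the optimum and the problem collapses to a scalar concave quadratic in $t = w\cdot u$. The crucial structural point, in contrast with the worst case, is that the leading coefficient along $w$ is $-\|w\|_2^2(\|w\|_2^2 + \lambda_1)$, which is strictly negative for every $\lambda_1 \geq 0$; thus the supremum is finite and attained for all $\lambda_1 \geq 0$, and no strict lower bound $\lambda_1 > \|w\|_2^2$ is forced as in the worst case. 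Solving the first-order condition yields $\Psi^{bc}_{\lambda,w}(s_i)$ as an explicit rational function of $\lambda_1$, $\lambda_2$, $w\cdot s_i$ and $\|w\|_2^2$.

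Next I would substitute this closed form into~(\ref{Dbc}), average over the sample, and minimize over $(\lambda_1,\lambda_2)$. The multiplier $\lambda_2$ is unconstrained and can be eliminated via its stationarity condition, which ties $\alpha$ to the empirical mean; carrying out the remaining minimization over $\lambda_1 \geq 0$ and negating produces the best-case variance as a function of $\alpha$ and $\delta$. The outer minimization $\min_{\alpha \geq \alpha_0}$ is then driven to $\alpha^\ast = \mathbb{E}^{Q_N}[w\cdot S_1]$: centering the moment constraint at the empirical mean both makes the feasibility constraint slack and spends the entire Wasserstein budget on shrinking the variance, yielding the smallest value. Assembling these pieces, with $\Sigma$ evaluated under $Q_N$, gives the candidate expression $\big(\sqrt{w^\top\Sigma w} - \sqrt{\delta}\|w\|_2\big)^2$.

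The main obstacle is the truncation $\max(\,\cdot\,,0)$, which has no analogue in the worst case. Because the best-case objective is bounded above for \emph{every} $\lambda_1 \geq 0$, the binding dual constraint is $\lambda_1 \geq 0$ itself, and complementary slackness governs which regime holds: when $\sqrt{\delta}\|w\|_2 \leq \sqrt{w^\top\Sigma w}$ the minimizer in $\lambda_1$ is interior (the budget is binding) and returns $\big(\sqrt{w^\top\Sigma w} - \sqrt{\delta}\|w\|_2\big)^2$, whereas when $\sqrt{\delta}\|w\|_2 > \sqrt{w^\top\Sigma w}$ the minimizer is pushed to the boundary $\lambda_1 = 0$, where the Wasserstein ball is large enough to contain a measure under which $w\cdot S_1$ is almost surely constant and the variance collapses to $0$. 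Showing that the value floors at $0$ rather than turning negative is the delicate step; the cleanest way to organize it is a two-case split on the sign of $\sqrt{w^\top\Sigma w} - \sqrt{\delta}\|w\|_2$, with the nonnegativity of variance supplying the floor in the second case. This delivers exactly $\max\big(\sqrt{w^\top\Sigma w} - \sqrt{\delta}\|w\|_2, 0\big)^2$.
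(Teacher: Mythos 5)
Your proposal is correct and takes essentially the same approach as the paper's proof in Appendix A: dualize the inner problem, solve the supremum defining $\Psi^{bc}_{\lambda,w}$ in closed form (finite for every $\lambda_1 \geq 0$ thanks to the leading negative quadratic), eliminate $\lambda_2$ via its stationarity condition, minimize over $\lambda_1$ with a case split on whether the constraint binds, and take $\alpha^* = \mathbb{E}^{Q_N}[w \cdot S_1]$ in the outer minimization. Your complementary-slackness dichotomy on $\lambda_1 \geq 0$ is precisely the paper's interior-versus-boundary split in the shifted variable $\kappa = \lambda_1 + \|w\|_2^2 \geq \|w\|_2^2$, which yields the $\max(\,\cdot\,,0)$ floor.
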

\begin{proof}
The proof consists of a series of steps. First one determines that $\Psi^{bc}_{\lambda,w}$ is well defined due to the (leading) negative quadratic term for $\Psi^{bc}_{\lambda,w}$. Next one evaluates first order optimality conditions for the dual formulation with respect to $\lambda_1 \geq 0$ and $\lambda_2$. The feasibility condition $\delta \| w \|^2_2 - ( \alpha - \mathbb{E}^{Q_N}[w \cdot S_1] )^2 \geq 0$ arises when evaluating optimality with repsect to $\lambda_1$. Then, using back-substitution and simplifying one arrives at the functional form in (4). Note that portfolio variance is non-negative (always) hence the zero floor induced by the $\max$ operator is sensible. See the Appendix for the detailed proof.
\end{proof}
\begin{remark}
It is interesting to note that the worst case and best case portfolio variances are symmetric with penalty and benefit terms $\sqrt{\delta} \| w \|_2$ respectively.
However, since variance is inherently a non-negative quantity, the best case portfolio variance is floored at zero. Furthermore, zero variance may lead to a classical arbitrage situation. Indeed, this is the case if $\:\: \exists \:\: w \in \Gamma^r_{s} \:\: \text{such that} \:\: w^\top \Sigma w = 0 \:\: \wedge \:\: w \cdot \tilde{S}_1 \geq 0$ where $\Sigma$ is evaluated under some probability measure $Q \in U_{\delta}(Q_N)$ \cite{pinar2005robust}.
\end{remark}
\subsubsection{Outer Optimization Problem}
The strong \textit{best case} DRPO condition (\ref{Pbc}) is \textit{now} 
\small
\begin{equation*}\label{D2bc}
v^{bc}_{\alpha^*}(\delta) := \inf_{ w \in \Gamma^d_s } \: \max{\big(\sqrt{w^\top \Sigma w} - \sqrt{\delta} \|w\|_2, 0 \big)}^2 \geq g_{\alpha^*}( \theta^{bc} ). \tag{D2\textsuperscript{bc}}
\end{equation*}
\normalsize

\begin{theorem}
$v^{bc}_{\alpha^*}(\delta)$ can be computed by solving non-convex nonlinear program (NLP) N\_SRPO\textsuperscript{bc} (listed below).
\end{theorem}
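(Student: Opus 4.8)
The plan is to mirror the argument used for the worst-case Theorem, isolating the single place where the two formulations genuinely differ. First I would observe that the feasible region of N\_SRPO\textsuperscript{bc} is exactly $\Gamma^d_s$, the same set appearing in the worst-case program: the open half-space condition $w \cdot S_0 < 0$ is enforced by the linear constraint $w \cdot S_0 \leq -\epsilon$ for a suitably small $\epsilon > 0$, and the relaxed first-moment condition $\mathbb{E}^{Q_N}[w \cdot S_1] \geq \tilde{\alpha}_0$ is the linear constraint $\frac{1}{N}\sum_{i=1}^N w \cdot s_i \geq \tilde{\alpha}_0$. Both constraints are linear in $w$, so the feasible set is a convex polyhedron and transfers verbatim from the worst-case formulation; nothing new is required here.

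Next I would relate the objective of (\ref{D2bc}) to that of N\_SRPO\textsuperscript{bc}. The defining expression $v^{bc}_{\alpha^*}(\delta) = \inf_{w \in \Gamma^d_s} \max\big(\sqrt{w^\top \Sigma w} - \sqrt{\delta}\|w\|_2,\, 0\big)^2$ carries an outer square, whereas N\_SRPO\textsuperscript{bc} minimizes the un-squared quantity $\max\big(\sqrt{w^\top \Sigma w} - \sqrt{\delta}\|w\|_2,\, 0\big)$. Because $\max(\cdot, 0) \geq 0$ for every $w$ and $t \mapsto t^2$ is strictly increasing on $[0, \infty)$, the squaring is a monotone transformation of a non-negative objective and hence preserves the set of minimizers. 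Consequently the two programs share the same optimal $w^*$, and $v^{bc}_{\alpha^*}(\delta)$ is recovered as the square of the optimal value returned by N\_SRPO\textsuperscript{bc}. This is precisely the reduction invoked in the worst-case proof, now applied to a non-negative (floored) objective.

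The substantive point — and the reason the program is labelled \emph{non-convex} — is the objective itself. Under the standing assumption that $\Sigma$ is positive definite under $Q_N$, the map $w \mapsto \sqrt{w^\top \Sigma w}$ is convex and $w \mapsto \sqrt{\delta}\|w\|_2$ is convex. Writing the reduced objective as $\max\big(\sqrt{w^\top \Sigma w} - \sqrt{\delta}\|w\|_2,\, 0\big) = \max\big(\sqrt{w^\top \Sigma w},\, \sqrt{\delta}\|w\|_2\big) - \sqrt{\delta}\|w\|_2$ exhibits it explicitly as a difference of two convex functions, the first being a pointwise maximum of convex functions and hence convex. Such a difference-of-convex (DC) function is non-convex in general, and the hard part is to confirm that this is genuinely the case here (rather than an artifact) whenever $\delta > 0$: I would do so by exhibiting negative curvature along directions in which the subtracted norm term dominates, so that the Hessian of the unconstrained objective fails to be positive semidefinite. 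This is the precise contrast with the worst-case program, where the two convex terms are \emph{added} and convexity is retained, yielding an SOCP. I would therefore conclude that solving N\_SRPO\textsuperscript{bc} and squaring its optimal value computes $v^{bc}_{\alpha^*}(\delta)$, but that the program must be treated as a constrained DC program, amenable to global-optimization or DC-programming methods rather than the standard conic solvers that suffice in the worst case.
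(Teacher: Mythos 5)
Your argument follows the paper's proof in its essentials: the feasible set of N\_SRPO\textsuperscript{bc} realizes $\Gamma^d_s$ via $w \cdot S_0 \leq -\epsilon$ (for small $\epsilon > 0$) together with the empirical first-moment constraint, and the objective is non-convex; your difference-of-convex decomposition $\max\bigl(\sqrt{w^\top \Sigma w} - \sqrt{\delta}\|w\|_2,\, 0\bigr) = \max\bigl(\sqrt{w^\top \Sigma w},\, \sqrt{\delta}\|w\|_2\bigr) - \sqrt{\delta}\|w\|_2$ is a nice way of making precise what the paper merely asserts (the paper only notes that $w \mapsto w^\top \Sigma w$ is convex but the objective is not, and calls the program solvable by standard NLP solvers).

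However, you have misread the program named in the statement, and this breaks your final conclusion. As listed, N\_SRPO\textsuperscript{bc} minimizes the \emph{squared} quantity $\max\bigl(\sqrt{w^\top \Sigma w} - \sqrt{\delta}\|w\|_2,\, 0\bigr)^2$, which is literally the same expression defining $v^{bc}_{\alpha^*}(\delta)$ in condition (D2\textsuperscript{bc}); no monotone-transformation step is needed for this theorem, because the objectives coincide identically and the optimal value of N\_SRPO\textsuperscript{bc} simply \emph{is} $v^{bc}_{\alpha^*}(\delta)$. Your recipe --- ``solve N\_SRPO\textsuperscript{bc} and square its optimal value'' --- applied to the program as actually listed would return $\bigl(v^{bc}_{\alpha^*}(\delta)\bigr)^2$, which is not the claimed quantity. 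The squaring/monotonicity argument you develop is precisely what the paper uses in the \emph{subsequent Corollary} to pass from N\_SRPO\textsuperscript{bc} to the un-squared, un-floored program N\_SRPO2\textsuperscript{bc} (objective $\sqrt{w^\top \Sigma w} - \sqrt{\delta}\|w\|_2$); in effect you have proved part of that corollary rather than the theorem as stated. The fix is trivial --- delete the squaring step and observe that the objective of the listed program equals the objective in (D2\textsuperscript{bc}) --- but as written your proof computes the wrong quantity.
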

\begin{mini}|l|
{\substack{w \in \mathbb{R}^n}}
{ \max{\big(\sqrt{w^\top \Sigma w} - \sqrt{\delta} \|w\|_2 , 0 \big)}^2 }{\label{N_SRPO}}
{}
\addConstraint{w \cdot S_0}{\leq - \epsilon}
\addConstraint{\frac{1}{N} \sum_{i=1}^N w \cdot s_i}{\geq \tilde{\alpha}_0}.
\end{mini}

\begin{proof}
Again, the formulation is straightforward. The constraint set $w \in \Gamma^r_s$ is readily obtained via the constraint $w \cdot S_0 \leq - \epsilon$ for a suitably small choice of $\epsilon > 0$. The first moment constraint $\mathbb{E}^{Q_N}[ w \cdot S_1 ] \geq \tilde{\alpha}_0$ is described as above. Note that the mapping $w \rightarrow w^\top \Sigma w$ is convex but the objective function is non-convex. It follows that N\_SRPO\textsuperscript{bc} is a non-convex nonlinear program solvable via standard solvers.
\end{proof}

\begin{corollary}
$w^* \in \argmin_{w \in \mathbb{R}^n} \sqrt{w^\top \Sigma w} - \sqrt{\delta} \|w\|_2 \\ \implies w^* \in \argmin_{w \in \mathbb{R}^n} \max{\big(\sqrt{w^\top \Sigma w} - \sqrt{\delta} \|w\|_2, 0 \big)}^2$ therefore $v^{bc}_{\alpha^*}(\delta)$ can be computed by solving non-convex nonlinear program (NLP) N\_SRPO2\textsuperscript{bc} (listed below).
\end{corollary}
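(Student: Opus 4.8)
The plan is to reduce the corollary to an elementary monotonicity principle for composed objectives. Write the best case objective as the composition $\phi \circ f$, where $f(w) := \sqrt{w^\top \Sigma w} - \sqrt{\delta}\,\|w\|_2$ and $\phi : \mathbb{R} \to \mathbb{R}$ is the scalar map $\phi(t) := \max(t,0)^2$. The first step is to observe that $\phi$ is non-decreasing on all of $\mathbb{R}$: it is identically zero on $(-\infty,0]$ and equals $t^2$ on $[0,\infty)$, and since $t \mapsto t^2$ is increasing on the non-negative reals, the clip-then-square map never descends, in particular across the kink at $t=0$. Monotonicity is the only property of $\phi$ I will use.

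Next I would invoke the general composition lemma: if $\phi$ is non-decreasing and $w^\ast \in \argmin_{w \in C} f(w)$ for some set $C$, then $w^\ast \in \argmin_{w \in C} \phi(f(w))$. The justification is one line: for every $w \in C$ we have $f(w^\ast) \le f(w)$, and applying the non-decreasing $\phi$ to both sides preserves the inequality, so $\phi(f(w^\ast)) \le \phi(f(w))$. Taking $C = \mathbb{R}^n$ gives exactly the displayed implication $\argmin_w f \subseteq \argmin_w \phi\circ f$; taking $C = \Gamma^d_s$ (intersected with the mean-return constraint) extends it to the feasible set of the nonlinear program. The ``therefore'' is then immediate: any optimizer $w^\ast$ returned by N\_SRPO2\textsuperscript{bc}, which minimizes the simpler objective $f$ over that feasible set, is also a minimizer of $\phi\circ f$, so the optimal value is recovered as $v^{bc}_{\alpha^*}(\delta) = \phi(f(w^\ast)) = \max\!\big(\sqrt{(w^\ast)^\top \Sigma w^\ast} - \sqrt{\delta}\,\|w^\ast\|_2,\,0\big)^2$. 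One therefore solves the more tractable program and evaluates $\phi$ at its solution instead of optimizing the maxed-and-squared objective directly.

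The point to handle with care --- more a matter of well-posedness than a genuine obstacle --- is the existence of a minimizer of $f$, since the lemma is vacuous without one. Because $f$ is positively homogeneous of degree one and the feasible set is stable under scaling by $t \ge 1$ (the constraints $w\cdot S_0 \le -\epsilon$ and $\tfrac{1}{N}\sum_i w\cdot s_i \ge \tilde\alpha_0$ with $\tilde\alpha_0 > 0$ are preserved under such scaling), the presence of any feasible $w_0$ with $f(w_0) < 0$ would force $f(t w_0) = t\,f(w_0) \to -\infty$. Hence on the feasible set $f$ is either bounded below, in which case $f \ge 0$ throughout and $\phi \circ f = f^2$ so that minimizing $f$ and minimizing $\phi\circ f$ coincide, or else $f$ dips negative, $\phi\circ f$ attains its floor, and $v^{bc}_{\alpha^*}(\delta) = 0$ directly --- the zero-variance, potential-arbitrage regime flagged in the preceding remark. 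The corollary's substitution is thus invoked exactly when a minimizer exists, and there the monotonicity argument reads off $v^{bc}_{\alpha^*}(\delta)$ correctly. I would also note that the lemma yields only the forward inclusion $\argmin f \subseteq \argmin \phi\circ f$ and not set equality (indeed $\phi$ is flat on $(-\infty,0]$), but this one-sided containment is precisely and all that the value computation requires.
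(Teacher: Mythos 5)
Your proof is correct and takes essentially the same route as the paper, whose entire argument is the observation that $t \mapsto \max(t,0)^2$ is a non-decreasing transformation of $g(w) = \sqrt{w^\top \Sigma w} - \sqrt{\delta}\,\|w\|_2$, so that minimizers of $g$ are minimizers of $\max(g(w),0)^2$. Your added care about existence of minimizers (via positive homogeneity and the scaling-stable feasible set, covering the unbounded-below case where $v^{bc}_{\alpha^*}(\delta)=0$) and about the inclusion being one-sided fills in details the paper's one-line proof leaves implicit, but does not constitute a different approach.
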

\begin{mini}|l|
{\substack{w \in \mathbb{R}^n}}
{ \sqrt{w^\top \Sigma w} - \sqrt{\delta} \|w\|_2 }{\label{N_SRPO2}}
{}
\addConstraint{w \cdot S_0}{\leq - \epsilon}
\addConstraint{\frac{1}{N} \sum_{i=1}^N w \cdot s_i}{\geq \tilde{\alpha}_0}.
\end{mini}

\begin{proof}
This follows by observing that $\max (g(w),0)^2$ is a monotonic (non-decreasing) transformation of $g(w)$.
\end{proof}

\begin{prop}
Solving N\_SRPO2\textsuperscript{bc} is equivalent to solving up to three one-dimensional search problems $\min_{t > 0} \sqrt{f(t)} \\ -\sqrt{\delta t}$ where 
$f(t)$ is the optimal value of a parameterized SDP problem.
\end{prop}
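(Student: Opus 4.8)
The plan is to decouple the two competing terms of the objective by introducing the scalar $t := \|w\|_2^2$ as an outer parameter. For fixed $t$ the subtracted term satisfies $\sqrt{\delta}\,\|w\|_2 = \sqrt{\delta t}$ and is therefore constant, so minimizing $\sqrt{w^\top\Sigma w} - \sqrt{\delta}\,\|w\|_2$ over the feasible set reduces to minimizing $w^\top\Sigma w$ subject to $\|w\|_2^2 = t$ together with the two linear constraints $w\cdot S_0 \le -\epsilon$ and $w\cdot\bar s \ge \tilde\alpha_0$, where $\bar s := \mathbb{E}^{Q_N}[S_1] = \tfrac1N\sum_{i=1}^N s_i$. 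Writing $f(t)$ for the optimal value of this inner problem, the objective collapses to $\sqrt{f(t)} - \sqrt{\delta t}$, so that N\_SRPO2\textsuperscript{bc} becomes the one-dimensional search $\min_{t>0}\sqrt{f(t)} - \sqrt{\delta t}$. First I would verify that this reduction is lossless: any feasible $w$ gives a feasible value $t=\|w\|_2^2$ with matching objective, and conversely the inner minimizer at the optimal $t^{*}$ is feasible for the original program, so the two infima coincide (using $\Sigma\succ0$ from Remark 2 to keep $f(t)>0$, hence $\sqrt{f(t)}$ well defined, for $t>0$).

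The count of three searches comes from a case analysis on the active linear constraints. Because the objective is positively homogeneous of degree one, no optimal $w$ can leave both linear constraints slack: if both were strict, rescaling $w\mapsto cw$ stays feasible for $c$ near $1$ and changes the objective by the factor $c$, so one could strictly decrease a positive objective by taking $c<1$, or drive a negative objective to $-\infty$ by taking $c>1$ (the latter being consistent with the $\max(\cdot,0)$ floor in N\_SRPO\textsuperscript{bc} forcing $v^{bc}_{\alpha^*}(\delta)=0$ through the preceding corollary). Hence at least one of $w\cdot S_0 = -\epsilon$, $w\cdot\bar s = \tilde\alpha_0$ binds, leaving exactly the three admissible active sets $\{S_0\}$, $\{\bar s\}$, and $\{S_0,\bar s\}$. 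For each I would fix the active constraint(s) as linear equalities and, after eliminating them to reduce the dimension, obtain the corresponding $f(t)$ as the value of $\min w^\top\Sigma w$ over the sphere $\|w\|_2^2=t$ intersected with that affine subspace, then solve the scalar problem $\min_{t>0}\sqrt{f(t)}-\sqrt{\delta t}$ and finally return the smallest of the (at most three) values, discarding any configuration whose omitted inequality is violated.

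It remains to identify $f(t)$ with the value of a semidefinite program. Each inner problem is the minimization of the quadratic $w^\top\Sigma w$ subject to a single quadratic (sphere) constraint and affine equalities — that is, a (generalized) trust-region subproblem. Lifting through $W=ww^\top$ and requiring the bordered matrix $\left[\begin{smallmatrix} W & w \\ w^\top & 1\end{smallmatrix}\right]\succeq0$ turns the objective $\langle\Sigma,W\rangle$, the norm constraint $\mathrm{tr}(W)=t$, and the affine equalities into linear relations, producing an SDP relaxation whose value lower-bounds $f(t)$; this is the parameterized SDP named in the statement. The key step is to show that the relaxation is exact, i.e., admits a rank-one optimal solution, so that the SDP value equals the nonconvex inner minimum rather than a mere bound.

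The exactness in the previous paragraph is the main obstacle, and it is precisely what the active-set reduction is designed to make tractable: by fixing the binding linear constraints as equalities and eliminating them, each inner problem becomes the minimization of a quadratic over a sphere intersected with a subspace, the canonical setting in which hidden convexity (the S-procedure / trust-region SDP) is known to be exact, so a rank-one optimal lift always exists. A secondary point I would treat carefully is attainment and the homogeneity-driven unboundedness flagged above: when $\delta$ is large enough that the unfloored objective dips below zero on the feasible cone, the scalar search diverges, which I would record as the regime where the floored value $v^{bc}_{\alpha^*}(\delta)$ equals zero, consistent with the corollary reducing N\_SRPO\textsuperscript{bc} to N\_SRPO2\textsuperscript{bc}.
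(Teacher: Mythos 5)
Your proposal is correct and follows essentially the same route as the paper's proof: a three-way case split on which of the two linear constraints bind (the paper phrases this via KKT multipliers $\beta_1,\beta_2$), elimination of the binding constraints as equalities to reduce the dimension, a scalar parameterization by $t$ (the value of the norm-type quadratic, which after elimination is the paper's $q_2(\tilde{w})$), and exactness of the SDP relaxation of the resulting single-quadratic-constraint subproblem via the S-lemma, yielding $\min_{t>0}\sqrt{f(t)}-\sqrt{\delta t}$. The additions you make --- the scaling argument for why at least one constraint must bind and the discussion of the zero floor and unboundedness --- are consistent with, and somewhat more explicit than, the paper's treatment, which simply restricts attention to the both-active case and notes the remaining cases can be handled separately.
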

\begin{proof}
The proof uses results about semidefinite programming (SDP) relaxations of quadratic minimization problems. See the Appendix for details.
\end{proof}


\begin{theorem}
For a given $\theta^{bc}$, the critical radius $\delta^{bc}_{\alpha^*}$ can be expressed as $\inf \{\delta \geq 0 :v^{bc}_{\alpha^*}(\delta) \leq g_{\alpha^*}(\theta^{bc})\}$. Furthermore, $\delta^{bc}_{\alpha^*}$ can be explicitly computed via binary search. Let $\delta_{\alpha^*} < \delta^{bc}_{\alpha^*}$. For $Q \in U_{\delta_{\alpha^*}}(Q_N)$, it follows that $Q$ allows strong RPOs of \textit{at most} degree $\theta^{bc}$. For $Q \notin U_{\delta^{bc}_{\alpha^*}}(Q_N)$, it follows that $Q$ \textit{may} allow strong RPOs of \textit{more than} degree $\theta^{bc}$.
\end{theorem}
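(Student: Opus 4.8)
The plan is to mirror the argument behind the worst-case critical-radius result, adapting every inequality to the \emph{reversed} monotonicity that the best-case value function inherits from the swap of $\sup$/$\inf$ and $\max$/$\min$. First I would note that the characterization $\delta^{bc}_{\alpha^*} = \inf\{\delta \geq 0 : v^{bc}_{\alpha^*}(\delta) \leq g_{\alpha^*}(\theta^{bc})\}$ is nothing more than a restatement of the strong best-case DRPO condition (\ref{D2bc}) together with the definition of DRPOs in Section 1.3.1: here $v^{bc}_{\alpha^*}(\delta)$ is the smallest attainable (best-case) portfolio variance over admissible weights and over the Wasserstein ball of radius $\delta$, and via the normalization $w \cdot \bar{S}_1 = \alpha^*$ the inequality $w^\top \Sigma w \geq g_{\alpha^*}(\theta^{bc})$ is exactly the statement $\theta \leq \theta^{bc}$ (smaller variance corresponds to larger robustness, hence larger variance to smaller robustness).

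The structural fact that drives everything is that $v^{bc}_{\alpha^*}$ is \emph{non-increasing} in $\delta$. For each fixed $w$ the map $\delta \mapsto \max(\sqrt{w^\top \Sigma w} - \sqrt{\delta}\,\|w\|_2, 0)^2$ is non-increasing, because the benefit term $\sqrt{\delta}\,\|w\|_2$ grows with $\delta$; taking the pointwise infimum over $w \in \Gamma^d_s$ preserves this. I would then record the two boundary behaviors. At $\delta = 0$ the benefit term vanishes, so $v^{bc}_{\alpha^*}(0) = \inf_{w \in \Gamma^d_s} w^\top \Sigma w > 0$: positive definiteness of $\Sigma$ gives $w^\top \Sigma w > 0$ for $w \neq 0$, while the constraint $w \cdot S_0 \leq -\epsilon$ keeps feasible $w$ bounded away from the origin and coercivity of the objective makes the infimum attained. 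As $\delta \to \infty$ the penalty dominates for every feasible $w$ and the floor forces $\lim_{\delta \to \infty} v^{bc}_{\alpha^*}(\delta) = 0 \leq g_{\alpha^*}(\theta^{bc})$. Consequently the sub-level set $\{\delta \geq 0 : v^{bc}_{\alpha^*}(\delta) \leq g_{\alpha^*}(\theta^{bc})\}$ is a nonempty half-line $[\delta^{bc}_{\alpha^*}, \infty)$; its infimum is therefore finite (degenerating to $0$ precisely when $g_{\alpha^*}(\theta^{bc}) \geq v^{bc}_{\alpha^*}(0)$), and the non-increasing monotonicity is exactly what makes bisection on $\delta$ converge to $\delta^{bc}_{\alpha^*}$.

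The two interpretive claims then follow directly from monotonicity. If $\delta_{\alpha^*} < \delta^{bc}_{\alpha^*}$, the definition of the threshold as an infimum gives $v^{bc}_{\alpha^*}(\delta_{\alpha^*}) > g_{\alpha^*}(\theta^{bc})$, so for every $Q \in U_{\delta_{\alpha^*}}(Q_N)$ the best-case variance strictly exceeds $g_{\alpha^*}(\theta^{bc})$; by the equivalence recorded above, no admissible portfolio can realize robustness beyond $\theta^{bc}$, i.e.\ $Q$ allows strong RPOs of \emph{at most} degree $\theta^{bc}$. Conversely, for $\delta \geq \delta^{bc}_{\alpha^*}$ one has $v^{bc}_{\alpha^*}(\delta) \leq g_{\alpha^*}(\theta^{bc})$, so the best-case variance constraint can now fail and robustness exceeding $\theta^{bc}$ becomes attainable for some $Q \notin U_{\delta^{bc}_{\alpha^*}}(Q_N)$; the word ``may'' reflects that this is a possibility for the optimizing measure rather than a guarantee for every such $Q$.

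The main obstacle I anticipate is not a single hard estimate but bookkeeping: keeping the direction of all inequalities consistent after the $\sup$/$\inf$ and $\max$/$\min$ swaps, and in particular verifying that the $\max(\cdot,0)$ floor neither breaks monotonicity nor introduces a plateau that would stall bisection before the true threshold. Establishing that the sub-level set is \emph{exactly} the half-line $[\delta^{bc}_{\alpha^*},\infty)$ — not merely nonempty — is the step that makes the bisection claim fully rigorous, and it is where I would concentrate the care.
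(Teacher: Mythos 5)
Your proposal is correct and takes essentially the same route as the paper's proof: it rests on the characterization of $\delta^{bc}_{\alpha^*}$ via condition (D2\textsuperscript{bc}) and the DRPO definition, the non-increasing monotonicity of $v^{bc}_{\alpha^*}(\delta)$, and the boundary behavior ($v^{bc}_{\alpha^*}(0) \geq 0$ and $\lim_{\delta \to \infty} v^{bc}_{\alpha^*}(\delta) \leq g_{\alpha^*}(\theta^{bc})$), which together justify binary search. You in fact supply more detail than the paper does (explicit verification of monotonicity pointwise in $w$, strict positivity at $\delta = 0$, and the half-line structure of the sub-level set), but the underlying argument is the same.
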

\begin{proof}
This characterization of the critical radius $\delta^{bc}_{\alpha^*}$ follows from the condition (\ref{D2bc}) as well as the definition of DRPOs (see Section 1.3.1). The asymptotic properties of $v^{bc}_{\alpha^*}$ are such that $v^{bc}_{\alpha^*}(0) \geq 0$ and $\lim_{\delta \to \infty} v^{bc}_{\alpha^*}(\delta) \leq g_{\alpha^*}(\theta^{bc})$. Furthermore, since $v^{bc}_{\alpha^*}(\delta)$ is a non-increasing function of $\delta$, it follows that $\delta^{bc}_{\alpha^*}$ can be computed via binary search.
\end{proof}
\begin{remark}
One can view the critical radius $\delta^{bc}_{\alpha^*}$ as a relative measure of the \textit{degree} of strong RPO in the reference measure $Q_N$. Those $Q_N$ which are ``close" to admitting RPOs of levels more than $\theta^{bc}$ will have a relatively smaller value of $\delta^{bc}_{\alpha^*}$.
\end{remark}

\subsection{Portfolio Restrictions}
This subsection discusses refinements to the DRPO conditions (see Sections 2.1 and 2.2) to characterize portfolio restrictions such as short sales restrictions, min and max position constraints, and cardinality constraints \cite{cornuejols2018optimization}. For efficiency of presentation, we refer the reader to the N\_SRPO NLP problems discussed in Sections 2.1 and 2.2 and do not restate those formulations here. An advantage of the computational machinery developed in this paper is that such portfolio restrictions can be readily incorporated into the existing framework. Table 1 (above) describes the various portfolio restrictions discussed here and associated constraints. Others are possible as well. Note that the index set is $j \in \{1,\dots,n\}$ which is suppressed for brevity.
\renewcommand{\arraystretch}{1.5}
\begin{table}[H]
\normalsize
\begin{center}
\caption{Portfolio Restrictions}
\begin{tabular}{ |c|c|l| }
 \hline
\textit{Restriction} & \textit{MINLP Constraint} & \textit{No Restriction} \\
 \hline
Short Sales & $w_j \geq ss_j$ & $ss_j = -M$ \\ 
\hline
Min Positions & $|w_j| \geq \underline{w}$ & $\underline{w} = 0$ \\ 
 \hline
Max Positions & $|w_j| \leq \overline{w}$ & $\overline{w} = M$ \\ 
\hline
Cardinality & $\sum_{j=1}^n \mathbbm{1}_{\{ |w_j| \geq \epsilon \}} \leq m$ & $m=n$ \\ 
\hline
Allocations & $| \sum_{j \in A_k} w_j |  \leq \overline{A_k}$ & $\overline{A_k}=M n$ \\ 
\hline
\end{tabular} 
\end{center} 
\end{table}
\renewcommand{\arraystretch}{1}


\section{Case Study}
This case study investigates the DRPOs for a five year historical data set (of month end closing prices) from July 2015 to June 2020 for a basket of exchange traded funds (ETFs) spread across different sectors of the economy. The 60 month end closing prices define the empirical distribution for random vector $S_1$ and the most recent closing values define $S_0$. The best and worst case critical values of $\theta$ are computed for a trajectory of Wasserstein radii $\delta$. The Matlab \textit{fmincon} solver is used, along with multiple search paths, to arrive at a more robust solution. The critical values are shown in the tables and graphs. Note that $\theta^* = \infty$ denotes the presence of classical arbitrage. For the worst case trajectory, shown in Figure 1, we see that it takes a relatively large value of $\delta$ to bring $\theta^* < 1$. On the other hand, for the best case trajectory, shown in Figure 2, we see that it takes a relatively small value of $\delta$ to bring $\theta^* \rightarrow \infty$.
Intuitively this means that the empirical distribution $Q_N$ is close (in terms of Wasserstein distance) to admitting classical arbitrage.

\begin{table}[!htb]
\begin{center}
\caption{Basket Constituents}
\begin{tabular}{ |c|c|c|c| }
 \hline
  Ticker & Name & Industry & Net Assets (bn) \\
 \hline
 FENY & Fidelity MSCI Energy & Energy & 0.46 \\
 \hline
 JETS & U.S. Global JETS & Travel & 0.93 \\
 \hline
 VGT & Vanguard Tech & Technology & 33.65 \\
 \hline
 VHT & Vanguard Health Care & Health & 12.64 \\
 \hline
 XLF & Financial SPDR Fund & Finance & 17.84 \\
 \hline
\end{tabular}
\end{center}
\end{table}

\begin{table}[!htb]
\begin{center}
\caption{$v_{\alpha^*}^{wc}(\delta)$: Worst Case degree $\theta^*$}
\begin{tabular}{ |r|r|r|r|r|r|r|r| }
 \hline
$\delta$ & 1 & 10 & 100 & 250 & 500 & 1000 \\
 \hline
$\theta^*$ & 2.45 & 2.03 & 1.37 & 1.08 & 0.87 & 0.68 \\
 \hline
\end{tabular}
\end{center}
\end{table}

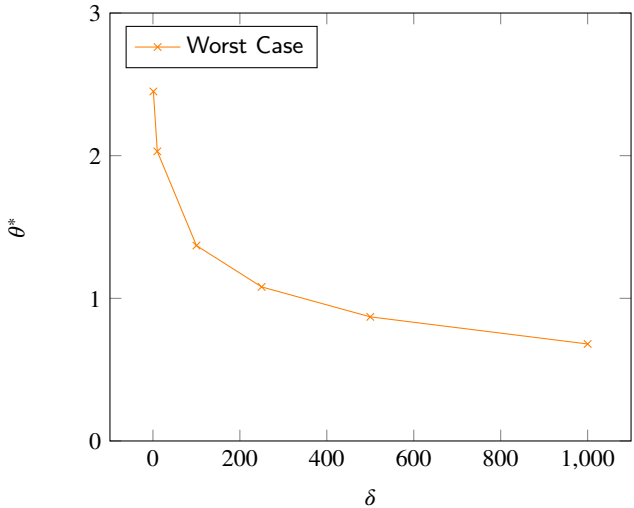
\begin{figure}[!htb]
\caption{Worst Case degree $\theta^*$}
\begin{center}
\begin{tikzpicture}
	\begin{axis}[legend pos=north west,
		xlabel=$\delta$,
		ylabel=$\theta^*$,
		ymin = 0.0, ymax = 3.0 ]
	\addplot[color=orange,mark=x] coordinates {
		(1,2.45)
		(10,2.03)
		(100,1.37)
		(250,1.08)
		(500,0.87)
		(1000,0.68)
	}; \label{plot4_y1}

	\addlegendimage{/pgfplots/refstyle=plot4_y1}\addlegendentry{Worst Case}
	\end{axis}
	
\end{tikzpicture}
\end{center}
\end{figure}

\begin{table}[!htb]
\begin{center}
\caption{$v_{\alpha^*}^{bc}(\delta)$: Best Case degree $\theta^*$}
\begin{tabular}{ |r|r|r|r|r|r|r| }
 \hline
$\delta$ & 0.001 & 0.5 & 1.0 & 2.0 & 5.0 & 10.0 \\
 \hline
$\theta^*$ & 2.83 & 3.92 & 5.78 & $\infty$ & $\infty$ & $\infty$ \\
 \hline
\end{tabular}
\end{center}
\end{table}

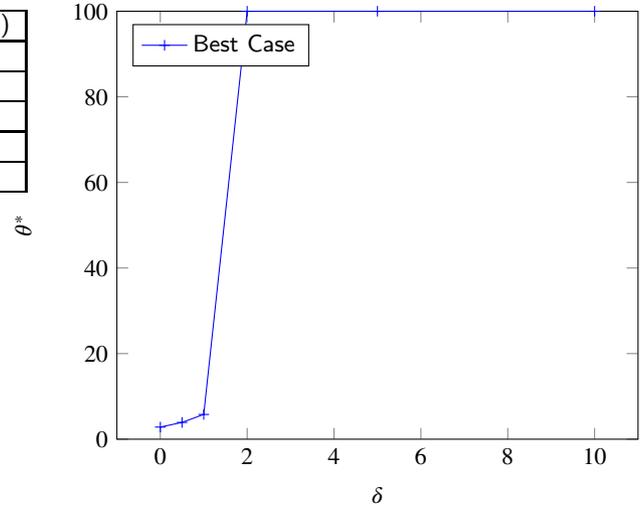
\begin{figure}[!htb]
\caption{Best Case degree $\theta^*$}
\begin{center}
\begin{tikzpicture}
	\begin{axis}[legend pos=north west,
		xlabel=$\delta$,
		ylabel=$\theta^*$,
		ymin = 0.0, ymax = 100 ]
	\addplot[color=blue,mark=+] coordinates {
		(0.001,2.83)
		(0.5,3.92)
		(1.0,5.78)
		(2.0,100)
		(5,100)
		(10,100)
	}; \label{plot5_y1}

	\addlegendimage{/pgfplots/refstyle=plot5_y1}\addlegendentry{Best Case}
	\end{axis}
	
\end{tikzpicture}
\end{center}
\end{figure}

\section{Conclusions and Further Work}
This work has developed theoretical results and investigated calculations of distributionally robust profit opportunities using Wasserstein distance as an ambiguity measure. The financial market overview and foundational notation and problem definitions were introduced in Section 1. Using recent duality results \cite {blanchetFirst}, the simpler dual formulation and its mixture of analytic and computational solutions were derived in Section 2. A case study was investigated in Section 3. Finally, we conclude with some commentary on directions for further research. One direction (as previously mentioned) is to extend the framework to incorporate risk free securities. Another direction is to consider reward-risk ratios other then the Sharpe ratio; a couple such examples would be the Sortino ratio and the CVaR ratio. \par

\section*{Data Availability Statement}
The raw and/or processed data required to reproduce the findings from this research can be obtained from the corresponding author, [D.S.], upon reasonable request.

\section*{Conflict of Interest Statement}
The authors declare they have no conflict of interest.

\section*{Funding Statement}
The authors received no specific funding for this work.

\bibliographystyle{cas-model2-names}

\bibliography{DRPOdcV2}

\begin{thebibliography}{14}
\expandafter\ifx\csname natexlab\endcsname\relax\def\natexlab#1{#1}\fi
\providecommand{\url}[1]{\texttt{#1}}
\providecommand{\href}[2]{#2}
\providecommand{\path}[1]{#1}
\providecommand{\DOIprefix}{doi:}
\providecommand{\ArXivprefix}{arXiv:}
\providecommand{\URLprefix}{URL: }
\providecommand{\Pubmedprefix}{pmid:}
\providecommand{\doi}[1]{\href{http://dx.doi.org/#1}{\path{#1}}}
\providecommand{\Pubmed}[1]{\href{pmid:#1}{\path{#1}}}
\providecommand{\bibinfo}[2]{#2}
\ifx\xfnm\relax \def\xfnm[#1]{\unskip,\space#1}\fi
\bibitem[{Blanchet et~al.(2018)Blanchet, Chen and Zhou}]{blanchetMV}
\bibinfo{author}{Blanchet, J.}, \bibinfo{author}{Chen, L.},
  \bibinfo{author}{Zhou, X.Y.}, \bibinfo{year}{2018}.
\newblock \bibinfo{title}{Distributionally robust mean-variance portfolio
  selection with wasserstein distances} .
\bibitem[{Blanchet et~al.(2019)Blanchet, Kang and Murthy}]{blanchet2019robust}
\bibinfo{author}{Blanchet, J.}, \bibinfo{author}{Kang, Y.},
  \bibinfo{author}{Murthy, K.}, \bibinfo{year}{2019}.
\newblock \bibinfo{title}{Robust wasserstein profile inference and applications
  to machine learning}.
\newblock \bibinfo{journal}{Journal of Applied Probability}
  \bibinfo{volume}{56}, \bibinfo{pages}{830--857}.
\bibitem[{Blanchet and Murthy(2019)}]{blanchetFirst}
\bibinfo{author}{Blanchet, J.}, \bibinfo{author}{Murthy, K.},
  \bibinfo{year}{2019}.
\newblock \bibinfo{title}{Quantifying distributional model risk via optimal
  transport}.
\newblock \bibinfo{journal}{Mathematics of Operations Research}
  \bibinfo{volume}{44}, \bibinfo{pages}{565--600}.
\bibitem[{Chen et~al.(2011)Chen, He and Zhang}]{chen2011all}
\bibinfo{author}{Chen, L.}, \bibinfo{author}{He, S.}, \bibinfo{author}{Zhang,
  S.}, \bibinfo{year}{2011}.
\newblock \bibinfo{title}{When all risk-adjusted performance measures are the
  same: In praise of the sharpe ratio}.
\newblock \bibinfo{journal}{Quantitative Finance} \bibinfo{volume}{11},
  \bibinfo{pages}{1439--1447}.
\bibitem[{Cornuejols and
  T{\"u}t{\"u}nc{\"u}(2018)}]{cornuejols2018optimization}
\bibinfo{author}{Cornuejols, G.}, \bibinfo{author}{T{\"u}t{\"u}nc{\"u}, R.},
  \bibinfo{year}{2018}.
\newblock \bibinfo{title}{Optimization methods in finance}.
\newblock \bibinfo{edition}{2} ed., \bibinfo{publisher}{Cambridge University
  Press}.
\bibitem[{Esfahani and Kuhn(2018)}]{Esfahani17}
\bibinfo{author}{Esfahani, P.M.}, \bibinfo{author}{Kuhn, D.},
  \bibinfo{year}{2018}.
\newblock \bibinfo{title}{Data-driven distributionally robust optimization
  using the wasserstein metric: Performance guarantees and tractable
  reformulations}.
\newblock \bibinfo{journal}{Mathematical Programming} \bibinfo{volume}{171},
  \bibinfo{pages}{115--166}.
\bibitem[{Gao and Kleywegt(2016)}]{gao2016distributionally}
\bibinfo{author}{Gao, R.}, \bibinfo{author}{Kleywegt, A.},
  \bibinfo{year}{2016}.
\newblock \bibinfo{title}{Distributionally robust stochastic optimization with
  wasserstein distance}.
\newblock \bibinfo{howpublished}{arXiv preprint arXiv:1604.02199}.
\bibitem[{Ostrovski(2013)}]{ostrovski2013stability}
\bibinfo{author}{Ostrovski, V.}, \bibinfo{year}{2013}.
\newblock \bibinfo{title}{Stability of no-arbitrage property under model
  uncertainty}.
\newblock \bibinfo{journal}{Statistics \& Probability Letters}
  \bibinfo{volume}{83}, \bibinfo{pages}{89--92}.
\bibitem[{P{\i}Nar and T{\"u}t{\"u}nc{\"u}(2005)}]{pinar2005robust}
\bibinfo{author}{P{\i}Nar, M.{\c{C}}.}, \bibinfo{author}{T{\"u}t{\"u}nc{\"u},
  R.H.}, \bibinfo{year}{2005}.
\newblock \bibinfo{title}{Robust profit opportunities in risky financial
  portfolios}.
\newblock \bibinfo{journal}{Operations Research Letters} \bibinfo{volume}{33},
  \bibinfo{pages}{331--340}.
\bibitem[{Sharpe(1994)}]{sharpe1994sharpe}
\bibinfo{author}{Sharpe, W.F.}, \bibinfo{year}{1994}.
\newblock \bibinfo{title}{The sharpe ratio}.
\newblock \bibinfo{journal}{Journal of portfolio management}
  \bibinfo{volume}{21}, \bibinfo{pages}{49--58}.
\bibitem[{Singh and Zhang(2020)}]{singh2020robust}
\bibinfo{author}{Singh, D.}, \bibinfo{author}{Zhang, S.}, \bibinfo{year}{2020}.
\newblock \bibinfo{title}{Robust arbitrage conditions for financial markets}.
\newblock \bibinfo{journal}{arXiv preprint arXiv:2004.09432} .
\bibitem[{Sturm and Zhang(2003)}]{sturm2003cones}
\bibinfo{author}{Sturm, J.F.}, \bibinfo{author}{Zhang, S.},
  \bibinfo{year}{2003}.
\newblock \bibinfo{title}{On cones of nonnegative quadratic functions}.
\newblock \bibinfo{journal}{Mathematics of Operations Research}
  \bibinfo{volume}{28}, \bibinfo{pages}{246--267}.
\bibitem[{Ye and Zhang(2003)}]{ye2003new}
\bibinfo{author}{Ye, Y.}, \bibinfo{author}{Zhang, S.}, \bibinfo{year}{2003}.
\newblock \bibinfo{title}{New results on quadratic minimization}.
\newblock \bibinfo{journal}{SIAM Journal on Optimization} \bibinfo{volume}{14},
  \bibinfo{pages}{245--267}.
\bibitem[{Zhao et~al.(2017)Zhao, Liu, Zhang and
  Yang}]{zhao2017distributionally}
\bibinfo{author}{Zhao, Y.}, \bibinfo{author}{Liu, Y.}, \bibinfo{author}{Zhang,
  J.}, \bibinfo{author}{Yang, X.}, \bibinfo{year}{2017}.
\newblock \bibinfo{title}{Distributionally robust reward--risk ratio
  programming with wasserstein metric}.

\end{thebibliography}


\appendix
\section{Proof of Proposition 2.1}
\begingroup
\small
\begin{prop*}
Using techniques from Proposition 3 and Theorem 1 in \cite{blanchetMV} it follows that when
\begin{align*}
\delta \| w \|^2_2 - ( \alpha - \mathbb{E}^{Q_N}[w \cdot S_1] )^2 \geq 0 \implies \\
\left\{ \sup_{ Q \in U_{\delta}(Q_N); \:\: \mathbb{E}^Q[ w \cdot S_1 ] = \alpha } - (w^\top \mathbb{E}^Q[ S_1 S_1^\top ] w) \right\}  
\end{align*}
\text{is feasible,} then for $w^\top \Sigma w = (w^\top \mathbb{E}^Q[ S_1 S_1^\top ] w) - \alpha^2$
\begin{equation}
\inf_{ Q \in U_{\delta}(Q_N); \:\: \mathbb{E}^Q[ w \cdot S_1 ] = \alpha } (w^\top \Sigma w)
\end{equation}
evaluates to
\begin{equation}
\max{\big(\sqrt{w^\top \Sigma w} - \sqrt{\delta} \|w\|_2, 0 \big)}^2
\end{equation}
where $\Sigma$ is evaluated under the reference measure $Q_N$ and the optimal $\alpha^* := \mathbb{E}^{Q_N}[ w \cdot S_1 ] \geq \alpha_0$.   
\end{prop*}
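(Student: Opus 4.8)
The plan is to establish strong duality between the inner best-case problem (I\textsuperscript{bc}) and its dual (D\textsuperscript{bc}), evaluate the dual in closed form, and then optimize over the mean level $\alpha$. First I would invoke the problem-of-moments duality (Proposition 2 of \cite{blanchetMV}, restated in Section 1.3.2) to assert that the inner supremum equals the value of (D\textsuperscript{bc}), after checking the interior-point (Slater) condition. I expect this interior-point requirement to be exactly what produces the feasibility constraint $\delta\|w\|_2^2 - (\alpha - \mathbb{E}^{Q_N}[w\cdot S_1])^2 \geq 0$, i.e. the ambiguity budget must be large enough to support a perturbed measure with the prescribed mean $\alpha$.

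Next I would verify that $\Psi^{bc}_{\lambda,w}$ is well defined and compute it. For $\lambda_1 \geq 0$ the bracketed objective is a concave quadratic in $\tilde s$, since both $-(w\cdot\tilde s)^2$ and $-\lambda_1\|\tilde s - s_i\|_2^2$ are concave, so the supremum is attained and finite (the degenerate case $\lambda_1 = 0$ is handled separately, where the remaining downward parabola in $w\cdot\tilde s$ still has a finite maximum). To obtain a closed form I would reduce to one dimension: fixing $y := w\cdot\tilde s$ and minimizing the transport cost $\|\tilde s - s_i\|_2^2$ over the hyperplane $w\cdot\tilde s = y$ gives the projection value $(y - w\cdot s_i)^2/\|w\|_2^2$, leaving a scalar concave maximization in $y$ solvable by a first-order condition.

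I would then substitute this closed form into the dual objective and minimize sequentially, first over the free multiplier $\lambda_2$ (completing the square yields $\lambda_2$ as an affine function of $\alpha$ and the empirical moments $\mathbb{E}^{Q_N}[w\cdot S_1]$, $\mathbb{E}^{Q_N}[(w\cdot S_1)^2]$), then over $\lambda_1 \geq 0$. The stationarity condition in $\lambda_1$ is where the feasibility condition resurfaces: writing $K := \delta\|w\|_2^2 - (\alpha - \mathbb{E}^{Q_N}[w\cdot S_1])^2$, the optimal $\lambda_1$ is interior precisely when $\sqrt{w^\top\Sigma w} \geq \sqrt{K}$ (with $\Sigma$ under $Q_N$), and otherwise the minimizer sits on the boundary $\lambda_1 = 0$ and the best-case variance collapses to zero. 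Back-substitution should collapse the algebra to $(\sqrt{w^\top\Sigma w} - \sqrt{K})^2$, floored at zero, which accounts for the $\max(\cdot,0)$ operator since variance is nonnegative.

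Finally I would optimize over $\alpha \geq \alpha_0$. Because the best-case objective $(\sqrt{w^\top\Sigma w} - \sqrt{K})^2$ is decreasing in $K$, and $K$ is maximized by driving the penalty $(\alpha - \mathbb{E}^{Q_N}[w\cdot S_1])^2$ to zero, the minimizing $\alpha^*$ equals $\mathbb{E}^{Q_N}[w\cdot S_1]$ (assuming this lies above $\alpha_0$); there $K = \delta\|w\|_2^2$ and the expression becomes $\max\big(\sqrt{w^\top\Sigma w} - \sqrt{\delta}\,\|w\|_2,\, 0\big)^2$, as claimed. The main obstacle I anticipate is the bookkeeping in the nested minimization over $(\lambda_1,\lambda_2)$ together with the case split at the zero floor and at $\lambda_1 = 0$, and confirming that the interior-point hypothesis genuinely holds so that strong duality (not merely weak duality) may be used to justify equating the primal infimum with the dual value.
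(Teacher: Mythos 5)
Your proposal follows essentially the same route as the paper's own proof: dualize the inner problem via the problem-of-moments result, compute $\Psi^{bc}_{\lambda,w}$ in closed form by reducing the perturbation to the direction of $w$, minimize sequentially over $\lambda_2$ and then $\lambda_1 \geq 0$ (with the boundary case $\lambda_1 = 0$ producing the zero floor and the condition $\sqrt{w^\top \Sigma w} \geq \sqrt{K}$ marking the interior regime, exactly as in the paper's $\kappa \geq \|\phi\|_2^2$ analysis), and finally set $\alpha^* = \mathbb{E}^{Q_N}[w \cdot S_1]$ to obtain $\max\big(\sqrt{w^\top \Sigma w} - \sqrt{\delta}\,\|w\|_2,\, 0\big)^2$. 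The only cosmetic difference is that you attribute the feasibility condition to the Slater requirement up front, while the paper recovers it from unboundedness of the dual minimization; these are equivalent views, so your plan is sound.
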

\begin{proof}
We apply techniques similar to Proposition 3 from \cite{blanchetMV} and map our notation to align with that paper for convenience of comparison. Towards that end we make the following substitutions: $\{Q,Q_N,\Psi,w,s_i,s\} \rightarrow \{P,P_N,\Phi,\phi,R_i,u\}$ respectively, and translate notation back for the final result. Using the new notation, the dual for problem \ref{Ibc} now becomes
\begin{equation}
- \left\{ \inf_{ \lambda_1 \geq 0, \lambda_2 } \: \bigg[ \: \lambda_1 \delta + \lambda_2 \alpha + \frac{1}{N} \sum_{i=1}^N \Phi (R_i) \: \bigg] \right\} 
\end{equation}
where
\begin{equation}
\Phi(R_i) = \sup_{u \in \mathbb{R}^n} \:\: [ \, -( \phi \cdot u )^2 - \lambda_1 c( u, R_i) - \lambda_2 (\phi \cdot u) \, ].
\end{equation}
Similarly, for $\phi^\top \Sigma \phi = (\phi^\top \mathbb{E}^Q[ R R^\top ] \phi) - \alpha^2$, (5) now becomes 
\begin{equation}
\inf_{ P \in U_{\delta}(P_N); \:\: \mathbb{E}^P[ \phi \cdot R ] = \alpha } (\phi^\top \Sigma \phi).
\end{equation}
Expanding the cost function $c(u,v) = \|u - v\|^2_2$ and making the substitution $\Delta = u - R_i$ gives
\begin{align}
\Phi(R_i) &= \sup_{\Delta} \:\: [ \, -( \phi \cdot (\Delta + R_i ))^2 - \lambda_1 \| \Delta \|^2_2 - \lambda_2 (\phi \cdot (\Delta+ R_i)) \, ] \nonumber \\
			&= \sup_{\Delta} \:\: [ \, -(\phi \cdot \Delta )^2 -2 (\phi \cdot R_i)(\phi \cdot \Delta) - \lambda_1 \| \Delta \|^2_2  - \lambda_2 (\phi \cdot \Delta) \, ] \nonumber \\
             &-( \phi \cdot R_i )^2 -\lambda_2 \phi \cdot R_i \nonumber \\
			&= \sup_{\Delta} \:\: [ \, -(\| \phi \|^2_2+ \lambda_1 ) \| \Delta \|^2_2 +2 |\phi \cdot R_i + \lambda_2 | \| \phi \|_2 \| \Delta \|_2 \, ] \nonumber \\
			&-( \phi \cdot R_i )^2 -\lambda_2 \phi \cdot R_i \nonumber \\
			&= -( \phi \cdot R_i )^2 -\lambda_2 \phi \cdot R_i + \frac{ (2 \phi \cdot R_i + \lambda_2)^2 \| \phi \|^2_2 }{ 4 ( \| \phi \|^2_2 + \lambda_1) }.
\end{align}
Hence $- \left\{ \inf_{ \lambda_1 \geq 0, \lambda_2 } \: \bigg[ \: \lambda_1 \delta + \lambda_2 \alpha + \frac{1}{N} \sum_{i=1}^N \Phi (R_i) \: \bigg] \right\}$ becomes
\begin{align}
&-\inf_{ \lambda_1 \geq 0, \lambda_2 } \: H \: := \frac{1}{N} \sum_{i=1}^N \big[  -( \phi \cdot R_i )^2 -\lambda_2 \phi \cdot R_i \nonumber \\
&+ \frac{ (2 \phi \cdot R_i + \lambda_2)^2 \| \phi \|^2_2 }{ 4 ( \| \phi \|^2_2 + \lambda_1) } \:\: \big] + \lambda_1 \delta + \lambda_2 \alpha.
\end{align}
The first order optimality condition for $\lambda_2$ gives 
\[
\frac{ \partial H}{ \partial \lambda_2} = \alpha + \frac{1}{N} \sum_{i=1}^N \bigg[ -(\phi \cdot R_i) + \frac{2 (2 \phi \cdot R_i + \lambda_2) \| \phi \|^2_2}{ 4 ( \| \phi \|^2_2 + \lambda_1 ) } \bigg] = 0.
\]
Recall $\| \phi \| > 0$ hence we obtain $\lambda_2^* = -2 \alpha -2 C \frac{\lambda_1}{\| \phi \|^2_2}$ where $C := \alpha - \phi \cdot \mathbb{E}^{P_N} (R)$.
Indeed, $\lambda_2^*$ is optimal since the second order condition for $\lambda_2$ gives
\[
\frac{ \partial^2 H }{ \partial \lambda_2^2 } = \frac{ \| \phi \|^2_2 }{ 2( \| \phi \|^2_2 + \lambda_1 ) } > 0.
\]
Substituting $\lambda_2^*$ back into $H$ gives $H =$
\begin{align}
&\frac{1}{N} \sum_{i=1}^N ( \phi \cdot R_i )^2 - \inf_{ \lambda_1 \geq 0 } \: \frac{1}{N} \sum_{i=1}^N \bigg[ \frac{ ( \phi \cdot R_i - \alpha - C \frac{ \lambda_1 }{ \| \phi \|^2_2 } )^2 }{ ( \| \phi \|^2_2 + \lambda_1 ) } \bigg] \nonumber \\
&  + \lambda_1 \delta -2 ( \alpha + C \frac{ \lambda_1 }{ \| \phi \|^2_2 } ) C.
\end{align}
Now let \: $\lambda_1 = \kappa - \| \phi \|^2_2 \geq 0 \implies \kappa \geq \| \phi \|^2_2$ \: to get $H =$
\begin{align}
&\frac{1}{N} \sum_{i=1}^N ( \phi \cdot R_i )^2 + 2 \alpha C - 2 C^2 + \| \phi \|^2_2 \delta \nonumber \\ 
&-\inf_{ \kappa \geq \| \phi \|^2_2 } \: \frac{1}{N} \sum_{i=1}^N \bigg[ \frac{ ( \phi \cdot R_i  - \alpha - \frac{ C \kappa }{ \| \phi \|^2_2 } + C )^2 \| \phi \|^2_2 }{ \kappa } \bigg] \nonumber \\
&-\kappa ( \frac{2 C^2}{ \| \phi \|^2_2 } - \delta ).
\end{align}
Partial substitution for $C = \alpha - \phi \cdot \mathbb{E}^{P_N}(R)$ and noting $\frac{1}{N} \sum_{i=1}^N -2 ( \phi \cdot R_i - \phi \cdot \mathbb{E}^{P_N}(R)) \frac{C \kappa }{\| \phi \|^2_2 } = 0$ gives $H =$
\begin{align}
&\mathbb{E}^{P_N}[(\phi \cdot R)^2] + 2 C (\phi \cdot \mathbb{E}^{P_N}(R)) + \delta \| \phi \|^2_2 \nonumber \\
&- \left\{ \inf_{ \kappa \geq \| \phi \|^2_2 } \: \frac{1}{N} \sum_{i=1}^N \bigg[ \frac{( \phi \cdot R_i - \phi \cdot \mathbb{E}^{P_N}(R) )^2 \| \phi \|^2_2 }{\kappa} \bigg]+ \kappa ( \delta - \frac{C^2 }{\| \phi \|^2_2 } ) \right\}.
\end{align}
If $\delta \| \phi \|^2_2 - C^2 < 0$ the solution is unbounded, which implies 
\begin{align*}
\sup_{ Q \in U_{\delta}(Q_N); \:\: \mathbb{E}^Q[ w \cdot S_1 ] = \alpha } - (w^\top \mathbb{E}^Q[ S_1 S_1^\top ] w)
\end{align*}
\text{is not feasible}. Therefore, impose the feasiblity constraint $\delta \| \phi \|^2_2 - C^2 = \delta \| \phi \|^2_2 - ( \alpha - \phi \cdot \mathbb{E}^{P_N}(R) )^2 \geq 0$. To evaluate the $\inf_{\kappa \geq \| \phi \|^2_2}$ expression, first make the substitution $A_i =  \frac{ ( \phi \cdot R_i - \phi \cdot \mathbb{E}^{P_N}(R) )^2 \| \phi \|^2_2 }{\kappa} $ and $B = (\delta - \frac{C^2}{ \| \phi \|^2_2 } )$ to get
$\inf_{ \kappa \geq \| \phi \|^2_2 } \:\: \frac{1}{N} \sum_{i=1}^N \:\: \big[ \frac{A_i}{\kappa} \big] + B \kappa$. Note this expression is convex hence for the \textit{unconstrained} problem, the first order optimality condition $- \frac{1}{N} \sum_{i=1}^N \frac{A_i}{\kappa^2} + B = 0$ suffices to determine $\kappa^*$.
Some algebra gives $\kappa^* = \sqrt{ \frac{ \frac{1}{N} \sum_{i=1}^N  A_i }{B} } \implies \inf_{ \kappa \geq 0 } \:\: \frac{1}{N} \sum_{i=1}^N \:\: \big[ \frac{A_i}{\kappa} \big] + B \kappa = 2 \sqrt{ \frac{1}{N} \sum_{i=1}^N A_i } \sqrt{B}$. This can be rewritten as
\begin{align}
&\inf_{ \kappa \geq 0 } \:\: \frac{1}{N} \sum_{i=1}^N \:\: \bigg[ \frac{A_i}{\kappa} \bigg] + B \kappa \nonumber \\
&= 2 \sqrt{ \phi^\top [ \frac{1}{N} \sum_{i=1}^N (R_i - \mathbb{E}^{P_N}(R)) (R_i - \mathbb{E}^{P_N}(R))^\top ] \phi } \sqrt{ \delta \| \phi \|^2_2 - C^2 } \nonumber \\
&= 2 \sqrt{ \phi^\top \Sigma \phi } \sqrt{ \delta \| \phi \|^2_2 - C^2 }.
\end{align}
Note that for the \textit{constrained} problem, $\kappa = \| \phi \|^2_2 \implies$ (8) evaluates to $\alpha^2 \implies$ (9) evaluates to 0. Thus we see that (9) becomes
\begin{equation}
\min_{\delta \| \phi \|^2_2 - C^2 \geq 0} \left\{ \begin{array}{lr}
		\Gamma_0 -2 \sqrt{ \phi^\top \Sigma \phi } \sqrt{ \delta \| \phi \|^2_2 - C^2 } - \alpha^2, & \text{for } k^* \geq \| \phi \|^2_2 \\
		0, & \text{otherwise}
		\end{array} \right\}
\end{equation}
where $\Gamma_0 = \mathbb{E}^{P_N}[(\phi \cdot R)^2] + 2 C (\phi \cdot \mathbb{E}^{P_N}(R)) + \delta \| \phi \|^2_2$.
Let us substitute for $C = \alpha - \phi \cdot \mathbb{E}^{P_N}(R)$ and do some work to expand and simplify the long first term inside the $\min$ expression for (14), call it $V_1$, to get
\begin{align*}
V_1 &= \mathbb{E}^{P_N}[(\phi \cdot R)^2]  - ( \phi \cdot \mathbb{E}^{P_N}(R) )^2 + \delta \| \phi \|^2_2 - \alpha^2 + 2 \alpha (\phi \cdot \mathbb{E}^{P_N}(R)) \\
&- (\phi \cdot \mathbb{E}^{P_N}(R))^2 - 2 \sqrt{ \phi^\top \Sigma \phi } \sqrt{ \delta \| \phi \|^2_2 - (\alpha - \phi \cdot \mathbb{E}^{P_N}(R))^2} \\
&= \phi^\top \Sigma \phi + \big[ \delta \| \phi \|^2_2 - (\alpha - \phi \cdot \mathbb{E}^{P_N}(R))^2 \big] \\
&- 2 \sqrt{ \phi^\top \Sigma \phi } \sqrt{ \delta \| \phi \|^2_2 - (\alpha - \phi \cdot \mathbb{E}^{P_N}(R))^2} \\
	 &= \bigg(\sqrt{\phi^\top \Sigma \phi } - \sqrt{ \delta \| \phi \|^2_2 - (\alpha - \phi \cdot \mathbb{E}^{P_N}(R))^2 } \:\: \bigg)^2.
\end{align*}
Now (14) can be written as $\min_{\delta \| \phi \|^2_2 -  (\alpha - \phi \cdot \mathbb{E}^{P_N}(R))^2 \geq 0}$
\begin{equation}
 \left\{ \begin{array}{lr}
		\bigg(\sqrt{\phi^\top \Sigma \phi } - \sqrt{ \delta \| \phi \|^2_2 - (\alpha - \phi \cdot \mathbb{E}^{P_N}(R))^2 } \:\: \bigg)^2, & \text{for } k^* \geq \| \phi \|^2_2 \\
		0, & \text{otherwise}
		\end{array} \right\}.
\end{equation}
Observing that $\alpha = \phi \cdot \mathbb{E}^{P_N}(R)$ realizes the minimum, and $\| \phi \| \neq 0$, it follows that (15) reduces to
\begin{equation}
		\left\{ \begin{array}{lr}
		\bigg(\sqrt{\phi^\top \Sigma \phi } - \sqrt{ \delta } \: \| \phi \|_2 \:\: \bigg)^2, & \text{for } k^* \geq \| \phi \|^2_2 \\
		0, & \text{otherwise}
		\end{array} \right\}.
\end{equation}
Next, we proceed to evaluate the condition $\kappa^* \geq \| \phi \|^2_2$. Recall $\kappa^* = \sqrt{ \frac{ \frac{1}{N} \sum_{i=1}^N  A_i }{B} }$. For $\alpha$ as above, this simplifies to $\kappa^* = \sqrt{ \frac{ \phi^\top \Sigma \phi }{\delta} } \| \phi \|_2$. The condition $\kappa^* \geq \| \phi \|^2_2$ now becomes $\sqrt{ \phi^\top \Sigma \phi } \geq \sqrt{ \delta } \| \phi \|_2$. Therefore (16) simplifies to \\
$\max{\big(\sqrt{\phi^\top \Sigma \phi} - \sqrt{\delta} \|\phi \|_2, 0 \big)}^2 \implies$ 
\begin{equation}
\inf_{ Q \in U_{\delta}(Q_N); \:\: \mathbb{E}^Q[ w \cdot S_1 ] = \alpha } (w^\top \Sigma w) = \max{\big(\sqrt{w^\top \Sigma w} - \sqrt{\delta} \| w \|_2, 0 \big)}^2
\end{equation}
and we are done.
\end{proof}
\endgroup

\section{Proof of Proposition 2.2}
\begingroup
\small
\begin{prop*}
Solving N\_SRPO2\textsuperscript{bc} is equivalent to solving up to three one-dimensional search problems $\min_{t > 0} \sqrt{f(t)} -\sqrt{\delta t}$ where 
$f(t)$ is the optimal value of a parameterized SDP problem.
\end{prop*}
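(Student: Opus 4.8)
The plan is to exploit the positive homogeneity of the objective and reduce N\_SRPO2\textsuperscript{bc} to an outer one-dimensional search in the squared norm $t := \|w\|_2^2$, with the inner problem (minimizing the portfolio variance for fixed $t$) handled by an exact SDP relaxation. Writing $\bar{s} := \mathbb{E}^{Q_N}[S_1]$, first I would fix $t > 0$ and observe that along the slice $\{w : \|w\|_2^2 = t\}$ the penalty term is constant, $\sqrt{\delta}\,\|w\|_2 = \sqrt{\delta t}$. Hence the inner minimization of the objective reduces to minimizing $w^\top \Sigma w$, and the whole program becomes $\min_{t>0} \big(\sqrt{f(t)} - \sqrt{\delta t}\,\big)$, where
\[
f(t) := \min\{\, w^\top \Sigma w \;:\; \|w\|_2^2 = t,\; w\cdot S_0 \leq -\epsilon,\; w\cdot \bar{s} \geq \tilde{\alpha}_0 \,\}.
\]
This is exactly the claimed outer one-dimensional search, so it remains to identify $f(t)$ with the value of a parameterized SDP and to explain the count ``up to three''.

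Second, I would recognize $f(t)$ as a QCQP: a quadratic objective subject to the single nonconvex quadratic (sphere) equality $\|w\|_2^2 = t$ and two linear inequalities. Lifting $W := ww^\top$ and imposing $\left(\begin{smallmatrix}1 & w^\top\\ w & W\end{smallmatrix}\right)\succeq 0$, the objective becomes $\langle \Sigma, W\rangle$, the sphere constraint becomes $\mathrm{tr}(W) = t$, and the linear inequalities stay linear in $w$. Dropping the rank-one restriction gives the parameterized SDP whose value I claim equals $f(t)$; the exactness of this relaxation (existence of a rank-one optimizer) is what must be established.

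Third, to get the number three I would carry out an active-set reduction on the two linear inequalities. A short homogeneity argument shows that at any bounded optimum at least one linear constraint is active: if both were slack, scaling $w$ by $c>1$ preserves feasibility (since $w\cdot S_0 < 0$ and $w\cdot\bar{s}>0$) and drives the degree-one objective to $-\infty$ whenever its value is negative, while if the value is nonnegative the minimizer is reached only when scaling down activates a constraint. This leaves exactly three admissible configurations --- only the budget constraint active, only the return constraint active, or both active --- and in each one I would convert the active inequalities to equalities and drop the inactive one, yielding a distinct parameterized SDP for $f(t)$ and a corresponding search $\min_{t>0}\big(\sqrt{f(t)}-\sqrt{\delta t}\,\big)$. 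The optimal value of N\_SRPO2\textsuperscript{bc} is then the least of these (at most) three one-dimensional minima, which also meshes with the $\max(\cdot,0)$ floor: the unbounded/negative regime corresponds precisely to the classical-arbitrage case in which the floored variance is $0$.

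The hard part is Step two: proving that each parameterized SDP relaxation is tight. In each configuration the inner problem minimizes a quadratic over the intersection of a sphere with an affine subspace of codimension one or two. For the sphere alone exactness is the classical trust-region result, but once one or two linear equalities are adjoined one must appeal to the sharper tightness theorems for the extended trust-region subproblem (QCQPs with a small number of constraints) to guarantee a rank-one SDP optimizer, and hence $f(t)$ equal to the SDP value. Verifying the hypotheses of such a theorem for each of the three configurations, together with the regularity needed for $t\mapsto f(t)$ to be well-defined and for the active-set enumeration to be exhaustive, is the crux of the argument.
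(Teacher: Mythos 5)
Your overall architecture coincides with the paper's: both arguments enumerate the (at most three) active-set configurations of the two linear constraints and, within each configuration, run a one-dimensional search over the norm parameter $t$, delegating the inner quadratic minimization to a parameterized SDP. Your homogeneity/scaling argument for why at least one linear constraint must be active at a bounded optimum is a legitimate substitute for the paper's KKT/complementary-slackness reasoning, and your remark that the negative-value regime is unbounded (classical arbitrage, the $\max(\cdot,0)$ floor) is consistent with the paper.

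The genuine gap is your Step two, and you flag it yourself: you never establish that the parameterized SDP is exact, deferring instead to unspecified ``sharper tightness theorems for the extended trust-region subproblem.'' As written, the central claim of the proposition --- that $f(t)$ \emph{is} the optimal value of an SDP --- is unproven. Moreover the difficulty is self-inflicted, and the paper's proof shows how it dissolves: once the active inequalities are converted to equalities (your Step three), use them to eliminate one or two variables, writing $w$ as an affine function of $\tilde{w} \in \mathbb{R}^{n-1}$ or $\mathbb{R}^{n-2}$. Then $w^\top \Sigma w$ and $\|w\|_2^2$ become nonnegative convex quadratics $q_1(\tilde{w})$ and $q_2(\tilde{w})$, the sphere-meets-affine-subspace constraint becomes the single quadratic equality $q_2(\tilde{w})=t$, and $f(t)=\min\{q_1(\tilde{w}) : q_2(\tilde{w})=t\}$ is a quadratic minimization with exactly \emph{one} quadratic constraint --- precisely the setting in which the S-lemma results of \cite{sturm2003cones} and \cite{ye2003new} give both exactness of the SDP relaxation and convexity of $f$; no multi-constraint trust-region theorem is required. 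A secondary caution: in the single-active configurations you propose to drop the inactive inequality, but the global minimizer of that relaxed configuration may violate the dropped constraint, in which case the true optimum (a KKT point with that active set) is not recovered by solving the configuration globally; the enumeration must retain a feasibility check on dropped constraints. (The paper is itself terse here, treating only the both-active case in detail, so this is a shared rather than a distinguishing weakness.)
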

\begin{proof}
Consider the reformulation of N\_SRPO2\textsuperscript{bc} given by
\begin{mini}|l|
{\substack{w \in \mathbb{R}^n}}
{ \big(\sqrt{w^\top \Sigma w} - \sqrt{\delta} \|w\|_2 \big) }{\label{N_SRPO2b}}
{}
\addConstraint{a^\top w}{\geq 1}
\addConstraint{b^\top w}{\geq 1}.
\end{mini}
The KKT optimality condition says that
\begin{equation}
\frac{\Sigma w}{\sqrt{w^\top \Sigma w}} - \frac{\sqrt{\delta} w}{\sqrt{w^\top w}} = \beta_1 a + \beta_2 b
\end{equation}
where $\beta_1 \geq 0$ and $\beta_2 \geq 0$ are the Lagrange multipliers associated with the linear constraints. For the purpose of our discussion (computational efficiency) let us restrict our attention to the case where $\beta_1 > 0, \beta_2 > 0 \implies a^\top w = 1 \wedge b^\top w = 1$. The other cases of either $a^\top w = 1$ or $b^\top w = 1$ can be treated separately. In this case, the two linear constraints eliminate two variables. Let $\tilde{w} \in \mathbb{R}^{n-2}$ denote the remaining $n-2$ variables. Then write the reformulation
\begin{equation}
\min_{\tilde{w} \in \mathbb{R}^{n-2}} \sqrt{q_1(\tilde{w})} - \sqrt{\delta q_2(\tilde{w})}
\end{equation}
where $q_1$ and $q_2$ are non-negative convex quadratic functions. Let $f(t)$ denote the optimal value of
\begin{mini}|l|
{\substack{\tilde{w} \in \mathbb{R}^{n-2}}}
{ q_1(\tilde{w}) }{}
{}
\addConstraint{q_2(\tilde{w})}{= t}.
\end{mini}
By the so-called S-lemma (see \cite{sturm2003cones} and \cite{ye2003new}), the function $f(t)$ is convex and can be evaluated by a parameterized SDP in polynomial time for any given $t$.
Now the reformulation reduces to 
\begin{equation}
\min_{t > 0} \sqrt{f(t)} - \sqrt{\delta t}
\end{equation}
and we are done.
\end{proof}
\endgroup

\end{document}